\theoremstyle{plain} %--default
\newtheorem{theorem}             {Theorem}  [section]
\newtheorem{proposition}[theorem]{Proposition}
\theoremstyle{definition}
\theoremstyle{remark}
\numberwithin{equation}{section}
\def\aand{\quad \textrm{and} \quad}
\begin{document}
% Fakesection

\title[]{Countering Violent Extremism: A mathematical model}

\author[]{Manuele Santoprete\\
                %Fei Xu\\
}
                \address{Department of Mathematics\\
                         Wilfrid Laurier University\\
                         Waterloo, ON, Canada}%
              %  \address{Mathematical Research Section\\
              %           School of Mathematical Sciences\\
              %           Australian National University\\
               %          Canberra ACT 2601, Australia}

%\subjclass{Primary 54C40, 14E20; Secondary 46E25, 20C20}
%\date     {July 2, 1991}
%\keywords{Pippo}
%         \thanks will become a 1st page footnote.
%\thanks{The first author was supported in part by NSF
%                Grant \#000000.}
%\dedicatory{}

\begin{abstract}
    The term radicalization refers to the process of developing extremist  religious political or social
beliefs and ideologies. Radicalization becomes a threat to national security when it leads to violence.
Prevention and de-radicalization initiatives are part of a set of strategies used to combat  violent extremism, 
which taken together are known as  Countering Violent Extremism (CVE).
Prevention programs aim to stop the radicalization process before it starts.
   De-radicalization programs  attempt to reform convicted extremists with the ultimate goal of social reintegration. 
   We describe prevention and de-radicalization programs mathematically using a compartmental model.  
   The prevention initiatives are modeled by including a vaccination compartment,
    while the de-radicalization process is modeled by including a treatment compartment.
    The model  exhibits a threshold dynamics characterized by the basic reproduction number $ R _0 $. 
    When $ R _0<  1 $ the system  has a unique equilibrium that is asymptotically stable. When $ R _0 >1 $ the system  has another equilibrium called  
    ``endemic equilibrium", which is globally asymptotically stable.
    These results are established by using  Lyapunov functions and LaSalle's invariance principle.  
    We perform numerical simulations to confirm our theoretical  results.

    %A Lyapunov function is used to show that, for $ R _0 >1 $  the endemic equilibrium is  global asymptotic
    %stable.  
  \end{abstract}
\maketitle

\begin{center} \today \end{center}

%\tableofcontents
%}
\section{Introduction}

The term {\it radicalization} refers to the process of developing extremist  religious political or social
beliefs and ideologies.  While radical thinking is by no means problematic in itself, it becomes a threat to national security when it leads to violence.  Because of this fact, radicalization is of particular concern for governments,  law enforcement and security agencies.  

A conventional, but arguably  antiquated, approach to national security is based on counterterrorism.
Counterterrorism strategies  consist of a law enforcement component (terrorists  are arrested, 
tried, and convicted) and a military component (terrorists  lose their life or are  captured on the battleground).

    Practitioners of counterterrorism, however,  agree that these approaches alone cannot break the cycle of violence \cite{selim2016approaches}.
    In light  of this, governments use an additional set of initiatives   collectively known as 
    countering violent extremism (CVE). 
    CVE programs can be classified into three categories\cite{selim2016approaches,mastroe2016surveying,clutterbuck2015deradicalization}
    \begin{enumerate} 
        \item {\it Prevention programs}, which aim to stop the radicalization process before it starts;
            %seek to prevent the radicalization process from occurring and taking hold in the first place;
        \item {\it Disengagement programs}, which  endeavor to block radicalization while it is taking place. 
            %to stop or control radicalization as it is occurring;
        \item {\it De-radicalization programs}, which  aim to reform convicted extremists with the ultimate goal of 
             social reintegration.
    \end{enumerate} 

%A simple mathematical model  that considered counterterrorism approaches was studied   by McCluskey and Santoprete in \cite{mccluskey2017bare}. 
 %In this paper we build on the model introduced in \cite{mccluskey2017bare} by adding a treatment compartment. This allows us to  consider de-radicalization in our  analysis. 
The development of viable intervention strategies to mitigate radicalization
and violence requires a thorough understanding of the  radicalization process, prevention, disengagement and 
deradicalization programs. Mathematical models can provide a first step in this direction. 
The aim of this paper is to  use a compartmental epidemiological model to analyze CVE programs, focusing on prevention and 
deradicalization initiatives.

The use of differential equations to describe social science problems dates back, at least,  to the  work of 
Lewis F. Richardson \cite{richardson1935mathematical} who pioneered the application of mathematical  techniques by  studying
the causes of war, and the relationship between arms race and the eruption of war. A summary of his
research was published posthumously in the book \cite{richardson1960arms}.

Modern applications of compartmental models to the social sciences range from models of political party growth,
to models of the spread of crime (see for instance \cite{crisosto2010community,hayward1999mathematical,jeffs2016activist,mcmillon2014modeling,mohammad2017analysis,
    romero2011epidemiological,sooknanan2013catching,sooknanan2016modified,sooknanan2018mathematical}). 
In recent years compartmental models have also been used to study  terrorism, the spread of fanatic behavior,
and radicalization \cite{castillo2003models,camacho2013development,galam2016modeling,mccluskey2018bare,santoprete2018global,nathanmodelling}. 
Furthermore, an age-structure model of radicalization was considered by Chuang, Chou and D'Orsogna 
\cite{chuang2018age}, a bi-stable model of radicalization within sectarian conflict was studied Chuang, D'Orsogna and Chou \cite{chuang2018bistable}, and a game theoretic model of radicalization was analyzed by Short, McCalla and d'Orsogna \cite{short2017modelling}.

The model we study here  extends the   one considered  in  \cite{santoprete2018global} 
by including a vaccinated class.  The purpose of this model is to  analyze two of the  CVE strategies,
namely prevention programs  and de-radicalization programs. 
As in \cite{mccluskey2018bare,santoprete2018global} we use Lyapunov functions to study the global
stability of the equilibria of the model and   the basic reproduction number $ \mathcal{R} _0 $ to
assess initiatives for combating terrorism.

 %We divide the population into four compartments, $ (S) $ susceptible, $ (E) $ extremists, $ (R) $ recruiters, and $ (T) $ treatment (see Figure \ref{fig:1}).

Although the literature indicates some degree of success of CVE programs, 
according to \cite{mastroe2016surveying}, there is little consensus regarding the validity of
CVE prevention programs or disengagement/de-radicalization programs, largely due to the lack of empirical data.
Furthermore, it is very difficult to evaluate these programs since  indicators of success and measures
of efficacy remain elusive \cite{ohalloran2017challenges}.
These are  key issues, since the  degree of government support for these programs  depends, to a large extent, 
on  demonstrating their effectiveness. 
The results we present in this paper are theoretical in nature and  are fairly  independent from the specific
choices of the parameter values.  Our model can, in principle, be used to evaluate the efficacy of
CVE programs in combating terrorism whenever empirical data are known.

\section{Equations}

%%%%%%%%%%%%%%%%%%%%%%%%%%%%%%%%%%%%%%%%%%%%%%%%%%
% Define block styles
%\tikzstyle{decision} = [diamond, draw, fill=blue!20, 
%    text width=4.5em, text badly centered, node distance=3cm, inner sep=0pt]

\tikzstyle{block} = [rectangle, draw, fill=blue!20, 
    text width=5em, text centered, rounded corners, minimum height=4em]
\tikzstyle{line} = [draw, -latex]
%\tikzstyle{cloud} = [draw, ellipse,fill=red!20, node distance=3cm,
%    minimum height=2em]
%%%%%%%%%%%%%%%%%%%%%%%%%%%%%%%%%%%%%%%%%%%%%%%%%
We use a compartmental model to describe the dynamics. We divide the population at risk 
of adopting an extreme ideology into five compartments 
%We model the spread of extreme ideology as a contact process.
%We assume that within the full population there is a subpopulation potentially at risk of
%adopting the ideology. We partition this   subpopulation  into five compartments:
\begin{enumerate}
    \item $(S)$  Susceptible
    \item  $(E)$ Extremists
    \item $ (R) $ Recruiters
    \item $ (T) $ Treated
    \item $ (V) $ Vaccinated. 
\end{enumerate}
Our model is extends the treatment model studied in \cite{santoprete2018global} by adding a     vaccinated compartment $ (V) $. This allows us  to  describe individuals in prevention programs. Our transfer diagram is similar, but different, to the one proposed by Yang, et al.  \cite{yang2016global} to model  the spread of tuberculosis with vaccination and treatment. The differences in the models are enough to create some complications in the construction of Lyapunov functions for our problem. 
The transfer diagram for our system is given below.

\begin{figure}[h]
\begin{tikzpicture}
\tikzset{VertexStyle/.style={circle,fill=blue!10,draw,minimum size=1cm,inner sep=5pt},
            }
    
  %\tikzset{VertexStyle/.style = {shape=circle, fill=white, minimum size = 20pt}} 
   %\Vertex[Math,L=^{238}\mathrm{U\,},x=0 ,y=7]{U238}
   \Vertex[Math,L=S,x=0,y=0]{S}
   \Vertex[Math,L=E,x=4,y=2.5]{E} 
   \Vertex[Math,L=R,x=4,y=-2.5]{R}
   \Vertex[Math,L=T,x=8,y=0]{T}
   \Vertex[Math,L=V,x=0,y=2.5]{V}
   \Vertex[empty,x=-2.5,y=0]{F0} 
   \Vertex[empty,x=0,y=-2.5]{F1}
   \Vertex[empty,x=4,y=4.5]{F2}
   \Vertex[empty,x=4,y=-4.5]{F3}
   \Vertex[empty,x=8,y=-2.5]{F4}
   \Vertex[empty,x=0,y=4.5]{F5}
   \Vertex[empty,x=-2.5,y=2.5]{F6}     
  \tikzset{EdgeStyle/.append style={->}} 
   \Edge[label = $q_E\beta SR$](S)(E)
   \Edge[label = $q_R\beta SR$](S)(R)
   \Edge[label = $c_R R$,style={bend left}](R)(E)
   \Edge[label = $c_E E$, style={bend left}](E)(R)  
   \Edge[label = $p_R R$](R)(T)
   \Edge[label = $p_E E$](E)(T) 
   \Edge[label = $(1-k)\delta T$, style={bend right}](T)(E)  
   \Edge[label = $p_S\Lambda$](F0)(S)
   \Edge[label = $\mu S$](S)(F1)
   \Edge[label = $(\mu+d_E) E$](E)(F2)
   \Edge[label = $(\mu+d_R) R$](R)(F3)  
   \Edge[label = $(\mu+k\delta) T$](T)(F4) 
   \Edge[label = $p_V\Lambda$](F6)(V)
   \Edge[label = $c_V V$](V)(S)
   \Edge[label = $\sigma q_E \beta VR$](V)(E)
   \Edge[label = $\mu V$](V)(F5) 
    \end{tikzpicture}
\end{figure}

Recruitment occurs in the system with rate constant $\Lambda > 0$. Of these individuals,  a fraction 
$ p _V $ enters the vaccination compartment, while a fraction $ p _S = 1 - p _V $  enters  the susceptible population.
The rate at which susceptibles are recruited is  $\beta S R$. A fraction $ q _E $ of the newly  recruited individuals
are assumed to transfer to  the extremist class, while the remainder $q_R = 1 - q_E \ll 1 $  transfer to   the recruiter class. 
Vaccinated individuals  are recruited at a reduced rate $\sigma q _E  \beta V R $, with $ 0 \leq \sigma \leq 1 $. 
The rate constant at which an individual leaves the extremist compartment to became a recruiter is $ c _E $,
while $ c _R $ is the rate constant at which a recruiter abandons the recruiter class to become an extremist. 
The natural death rate  constant is $ \mu $, $ d _E $ and $ d _R $ are supplementary death rate constants 
for individuals in compartments $ E $ and $ R $, respectively. The additional rates $ d _E $ and $ d _R $
take into consideration individuals that die or are sentenced to lifelong incarceration as a result of police or military action. 
The rate constants of extremists transferred to the treatment compartment is $ p _E $, while $ p _R $ is 
the rate constant of extremists moving to compartment $ T $. 

Treated individuals exit  $ T $ at a rate $ \delta$. 
We remove a  fraction $ k \in [0,1]$ of treated individuals, since   effectively treated individuals
are  de-radicalized forever. 
For a  fraction $ 1-k $ of treated individuals the de-radicalization program is unsuccesful.
These individuals get into the $ E $ compartment  after treatment. 

Based on the above assumptions we obtain the following model: 
\begin{equation}\label{eqn:modela}
 \begin{aligned}
 S' & = p _S \Lambda + c _V V  - \mu S - \beta SR		\\
 E' & = q_E\beta SR + \sigma  q _E \beta V R  - (\mu + d_E + c_E+ p _E ) E + c_R R +(1-k)\, \delta T		\\%+(1-k)r _E \,\delta T	
 R' & = q_R \beta SR + c_E E - (\mu + d_R + c_R+p _R ) R\\%+(1-k)r _R \,\delta T	\\
T' & = p _E E + p _R R - (\mu +\delta)T\\ 
V' & = p _V \Lambda - c _V V - \mu V - \sigma q _E \beta V R 
 \end{aligned}
\end{equation}
where $ q _E + q _R = 1 $, $ q _E , q _R \in [0,1] $. 
To simplify system   \eqref{eqn:modela} we introduce the following parameters $ b _E = \mu + d_E + c_E+ p _E $,
$ b _R =  \mu + d_R + c_R+ p _R  $, $ b _T = \mu + \delta $ and $ b _V = c _V + \mu $.
Using these new constants in \eqref{eqn:modela} yields:
\begin{equation}\label{eqn:modelb}
 \begin{aligned}
 S' & = p_S\Lambda  + c _V V - \mu S - \beta SR		\\
 E' & = q_E\beta(S + \sigma V) R  - b _E  E + c_R R +(1-k) \, \delta T		\\%+(1-k)r _E \, \delta T	
 %q_E\beta SR + \sigma q _E \beta V R  - b _E  E + c_R R +(1-k) \, \delta T		\\%+(1-k)r _E \, \delta T	
 R' & = q_R \beta SR + c_E E - b _R  R\\%(1-k)r _R \,\delta T\\
T' & = p _E E + p _R R - b _T T  \\
V' & = p _V \Lambda - b _V V - \sigma  q _E \beta V R. 
 \end{aligned}
\end{equation}
It is not difficult to show that the region
\[ \Delta= \left \{ (S, E , R,T,V) \in \mathbb{R}  ^5 _{ \geq 0 } : S +E+R+T+V \leq \frac{ \Lambda } { \mu } \right \}\]
is a compact positively invariant and attracting set that attracts all solutions of \eqref{eqn:modelb} with initial
conditions in $ \mathbb{R}  ^5 _{ \geq 0 } $. See Proposition 2.1 in \cite{santoprete2018global} for a proof 
of a similar statement. 

\section{Radicalization-free equilibrium and basic reproduction number $ \mathcal{R} _0 $}

There is a unique equilibrium with  $ E = R =T= 0 $  given by $ x _0 = \left( S _0 ,0,0,0,V _0  \right) $, where 
\begin{equation}\label{eqn:S0V0}
   S _0 = \frac{ \left( p _S + \frac{ c _V } { b _V } p _V  \right) \Lambda } { \mu }= \frac{ \left( p_S+ \frac{ c_V } { \mu }  \right) \Lambda } { b_V } \quad\quad V _0 = \frac{ p _V \Lambda } { b _V }   
\end{equation} 

We denote by $ \mathcal{R} _0 $ the spectral radius of  the matrix $ G $ evaluated at $ x _0 $.  
%The basic reproduction number $\mathcal{R}_0 $  is the spectral radius of the next generation matrix $G $
%calculated at $ x _0 $.
$ \mathcal{R} _0 $ is called the {\it basic reproduction number} and can be obtained  as outlined 
by Van Den Driessche and Watmough \cite{van2002reproduction}.
First we identify the infected classes, that in this example turn out to be  $ E,R,T $.

Suppose  $ \mathcal{F}  _E $, $ \mathcal{F}  _R $ and $ \mathcal{F}  _T $  are the rates of arrival of newly radicalized individuals
in the compartment $ E $, $ R $, and $ T $,  respectively. Let $ \mathcal{V}  _j = \mathcal{V} 
_j ^{ - } - \mathcal{V}  _j ^{ + } $,  with $ \mathcal{V}  _j ^{ + } $  be the rate of
transmission of individuals into compartment $ j \in \{E,R,T\}$    by all remining methods, and 
$ \mathcal{V}  _j ^{ - } $  the rate of removal of individuals from compartment $j$, where 
$ j $ is one of $ E, R $, and $ T $. In our problem  
\[
    \mathcal{F} = \begin{bmatrix}
        \mathcal{F}  _E   \\
        \mathcal{F}  _R\\
        \mathcal{F}  _T 
    \end{bmatrix} =\beta  
    \begin{bmatrix}
       q _E (SR+\sigma VR)\\
       q _R SR\\
       0
  \end{bmatrix}  
%\aand
\]
and 
\[
\mathcal{V} =  \begin{bmatrix}
          \mathcal{V}  _E   \\
          \mathcal{V}  _R\\ 
          \mathcal{V} _T  
    \end{bmatrix} = 
    \begin{bmatrix}
     b _E E - c _R R - (1 - k) \delta T    \\%- (1 - k) r _E \delta T%
     b _R R - c _E E\\  %- (1 - k) r _R \delta T 
     b _T T - (p _E E + p _R R)    
  \end{bmatrix}.
\]
Then consider  the matrices 
\[
    F = \begin{bmatrix}
        \frac{ \partial \mathcal{F}  _E } { \partial E }  & \frac{ \partial \mathcal{F}  _E } { \partial R }  &  \frac{ \partial \mathcal{F}  _E } { \partial T }\\[1em]
        \frac{ \partial \mathcal{F}  _R } { \partial E }  & \frac{ \partial \mathcal{F}  _R } { \partial R } &  \frac{ \partial \mathcal{F}  _R } { \partial T }\\[1em]
        \frac{ \partial \mathcal{F}  _T } { \partial E }  & \frac{ \partial \mathcal{F}  _T } { \partial R } &  \frac{ \partial \mathcal{F}  _T } { \partial T }\\

    \end{bmatrix} (x_0)
  \aand
  V = \begin{bmatrix}
        \frac{ \partial \mathcal{V}  _E } { \partial E }  & \frac{ \partial \mathcal{V}  _E } { \partial R }  &  \frac{ \partial \mathcal{V}  _E } { \partial T }  \\[1em]
        \frac{ \partial \mathcal{V}  _R } { \partial E }  & \frac{ \partial \mathcal{V}  _R } { \partial R } &  \frac{ \partial \mathcal{V}  _R } { \partial T }\\[1em]
         \frac{ \partial \mathcal{V}  _T } { \partial E }  & \frac{ \partial \mathcal{V}  _T } { \partial R } &  \frac{ \partial \mathcal{V}  _T } { \partial T }
    \end{bmatrix} (x_0),
\]
which in our problem take the form 
\[
    F =\beta  \begin{bmatrix}
         0&   q _E(S_0+\sigma V _0 ) & 0 \\
         0 & q _R S_0 & 0\\
         0 & 0 & 0
    \end{bmatrix}
\aand
%\]
%and 
%\[
   V =  \begin{bmatrix}
       b _E & - c _R & - \alpha _E  \\
       - c _E & b _R  & 0\\
       - p _E & - p _R & b _T 
  \end{bmatrix},
\]
where $ \alpha _E = (1 - k) \delta $.  

Finally, we can compute the next generation matrix  $ G  = F V ^{ - 1 }$:  
%\begin{footnotesize} 
\begin{align*} 
    G & =-\frac{ \beta 
           }{b _T D}     
 \begin{bmatrix}
        0  & q _E(S_0+\sigma V _0 ) & 0  \\
        0& q _R S _0& 0 \\
        0 & 0 & 0 
    \end{bmatrix}
 \begin{bmatrix}
       - b _R b _T  & -(\alpha _E p _R + c _R b _T )& \alpha _E b _R  \\
      -c _E b _T  & \alpha _E p _E - b _E b _T & - \alpha _E c _E \\
     - b _R p _E - c _E p _R & - b _E p _R - c _R p _E & - b _E b _R + c _R c _E   
  \end{bmatrix}\\
 & = -\frac{ \beta } {b _T  D } \begin{bmatrix}
    -q _E c _E b _T(S _0 + \sigma V _0 )  & q _E(\alpha _E p _E - b _E b _T )(S _0 + \sigma V _0)    & - q _E \alpha _E c _E S _0  \\
   -q _R c _E b _T S _0   & q _R( \alpha_E p _E - b _E b _T) S _0  & - q _R \alpha _E c _E S _0\\
  0 & 0 & 0 
   \end{bmatrix},  
   \end{align*} 
%\end{footnotesize}
where $D = b _E b _R - c _E c _R   -\frac{ \alpha _E}{b _T } (b _R p _E + c _E p _R)>0 $.
Since the matrix $ G $  has  only one non-zero eigenvalue,  its spectral radius is:
\begin{equation}				\label{Rzero}
     \mathcal{R} _0 =  \frac {\beta S _0  (c _E q _E + b _E q _R - \frac{ \alpha _E p _E } { b _T } q _R )+
         \beta \sigma V _0 q _E c _E } { b _E b _R - c _E c _R - \frac{ \alpha _E } { b _T } (c _E p _R + b _R p _E) }.    
\end{equation}
%Note that the basic reproduction number $ \mathcal{R} _0 $ does not depend on $ \beta _2 $. This  suggests considering the special case $ \beta _2 = 0 $. 

\section{Global Asymptotic Stability of  $ x _0 $ for $ \mathcal{R} _0 < 1 $}
In this section we prove   the global asymptotic stability of  the equilibrium $ x _0 $.
For this purpose we introduce  the following  Lyapunov function 
   \[
       U = A\, 
       \frac{ (S- S_0)^2}{2S_0} +
       \frac{ (V - V _0) ^2 } { 2 V _0 }   + \frac{ H } { b _T c _E }, 
   \]
where  $ H $ is the function defined by 
\[ H(E,R,T) =  b _T c _E E + (b _T b _E - \alpha _E p _E) R + \alpha _E c _E T. \]
Let 
\[Q= b _T b _E - \alpha _E p _E=(\mu + \delta) (\mu + d _E + c _E) + \mu p _E + k \delta p _E >0,  \]
then 
\[ A = \beta \frac{ c _E q _E + b _E q _R - \frac{\alpha _E p _E}{b _T } q _R }{c _E } \] 
is a positive constant since 
\[ c _E q _E + \frac{ q _R}{b _T }  (b _E b _T - \alpha _E p _E) = c _E q _E + \frac{ q _R}{ b _T }  Q>0. \]

We can now prove the following theorem 

\begin{theorem}\label{thm:x0}
   Suppose $ A p _V < 4 $, and  $ \mathcal{R} _0 \leq 1 $ then $ x _0 $ is globally asymptotically stable on  $ \mathbb{R}  ^4 _{ \geq 0 }$.  
\end{theorem}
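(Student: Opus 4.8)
The plan is to show that $U$ is a Lyapunov function for $x_0$ on $\Delta$ and then invoke LaSalle's invariance principle. First I would record positive definiteness: the terms $A\frac{(S-S_0)^2}{2S_0}$ and $\frac{(V-V_0)^2}{2V_0}$ are nonnegative and vanish only at $S=S_0$, $V=V_0$ (recall $A>0$ was already verified), while $H$ is a linear form in $E,R,T$ with coefficients $b_Tc_E>0$, $Q>0$ and $\alpha_Ec_E\ge 0$, so $H\ge 0$ on $\mathbb{R}^5_{\ge 0}$ and vanishes only when $E=R=T=0$. Hence $U\ge 0$ with equality exactly at $x_0$.

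Next I would differentiate $U$ along \eqref{eqn:modelb}. Using the equilibrium identities $p_S\Lambda+c_VV_0=\mu S_0$ and $p_V\Lambda=b_VV_0$ I rewrite $S'=-\mu(S-S_0)+c_V(V-V_0)-\beta SR$ and $V'=-b_V(V-V_0)-\sigma q_E\beta VR$. The crucial computation is $H'$: by the very choice of the coefficients of $H$, the linear contributions in $E$ and in $T$ cancel identically, and the linear contribution in $R$ collapses to $-b_TDR$, where $D>0$ is exactly the denominator of $\mathcal{R}_0$ in \eqref{Rzero}. The bilinear recruitment terms combine, through the definition of $A$, so that the coefficient of $R$ evaluated at $(S_0,V_0)$ equals $\frac{D}{c_E}(\mathcal{R}_0-1)$, which is $\le 0$ precisely when $\mathcal{R}_0\le 1$.

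After collecting terms, $\dot U$ splits into a quadratic form in $(S-S_0,V-V_0)$ built from the $-\mu(S-S_0)^2$, cross, and $-b_V(V-V_0)^2$ contributions, plus the term $\frac{D}{c_E}(\mathcal{R}_0-1)R$, plus residual nonlinear pieces. I would then require negative definiteness of the $2\times 2$ matrix with diagonal entries $\frac{A\mu}{S_0}$, $\frac{b_V}{V_0}$ and off-diagonal $-\frac{Ac_V}{2S_0}$; computing its determinant and substituting $S_0$, $V_0$ together with $b_V=c_V+\mu$ shows that $Ap_V<4$ is a clean sufficient condition, since it implies the sharp determinant inequality $Ap_V<4b_V(b_Vp_S+c_Vp_V)/c_V^2$ (whose right-hand side is $\ge 4$). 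With $Ap_V<4$ and $\mathcal{R}_0\le 1$, both of these blocks are nonpositive.

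The main obstacle is the residual coupling between $R$ and $S-S_0$. Because the Lyapunov term for $S$ is quadratic rather than logarithmic, the recruitment loss $-\beta SR$ in $S'$ does not telescope exactly against the source $ARS$ produced by $H'$ (the mismatch being $A\ne\beta$), leaving an indefinite cross term of the shape $A(1-\beta)(S-S_0)R$ together with the genuinely nonpositive higher-order pieces $-\frac{A\beta(S-S_0)^2}{S_0}R$ and $-\frac{\sigma q_E\beta(V-V_0)^2}{V_0}R$. The $V$-coupling telescopes cleanly into the last term, but the $S$-coupling does not, and controlling $A(1-\beta)(S-S_0)R$ is exactly where the delicacy lies; I expect to dominate it using the strictly negative quadratic form in $(S-S_0,V-V_0)$, the negative degree-three terms, and the sign of $\mathcal{R}_0-1$, exploiting the boundedness of all variables on the compact set $\Delta$. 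Once $\dot U\le 0$ is in hand, I would finish with LaSalle's invariance principle: on the largest invariant subset of $\{\dot U=0\}$ the strictly negative quadratic form forces $S\equiv S_0$ and $V\equiv V_0$, whence $S'\equiv 0$ gives $\beta S_0R\equiv 0$ and so $R\equiv 0$; then $R'\equiv 0$ gives $E\equiv 0$ and $T'\equiv 0$ gives $T\equiv 0$, reducing the set to $\{x_0\}$ and yielding global asymptotic stability.
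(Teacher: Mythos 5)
Your strategy coincides with the paper's (same $U$, same Hessian test for $f$, same LaSalle endgame), and several of your computations are right where the paper's display is actually typo-ridden: the $R$-coefficient of $H'$ is indeed $-b_TD$, so the normalized linear term is $-\frac{D}{c_E}R$ and the $R$-coefficient at $(S_0,V_0)$ is $\frac{D}{c_E}(\mathcal{R}_0-1)$ (the paper's ``$+\frac{D}{b_Tc_E}R$'' is a slip), and your determinant condition $Ap_V<4b_V(b_Vp_S+c_Vp_V)/c_V^2$ is equivalent to the paper's positivity of $(4-Ap_V)c_V^2+\cdots$. The genuine gap is at the decisive step: you never establish $\dot U\le 0$. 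You leave the indefinite term $A(1-\beta)(S-S_0)R$ unresolved and propose to dominate it by the negative quadratic form, the degree-three terms, and compactness of $\Delta$. No such domination exists: at $\mathcal{R}_0=1$, $V=V_0$, $S=S_0+\epsilon$, your expression for $\dot U$ reduces to $-\frac{A\mu}{S_0}\epsilon^2-\frac{A\beta}{S_0}\epsilon^2R+A(1-\beta)\epsilon R$, which is strictly positive for fixed $R>0$ and $\epsilon$ small, because the offending term is linear in $S-S_0$ while every negative term is quadratic in $(S-S_0,V-V_0)$, and the $R$-linear term carries the vanishing factor $\mathcal{R}_0-1$. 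Boundedness on $\Delta$ cannot repair a pointwise sign failure, so the proof as proposed cannot be completed. (The dimensionally incoherent combination $1-\beta$ is itself a warning that something is off.)

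What you have detected is in fact a normalization slip in the paper's constant, not an analytic obstruction, and the correct move is to remove it rather than fight it. The $SR$-coefficient of $H'/(b_Tc_E)$ is exactly $A=\beta\bigl(c_Eq_E+b_Eq_R-\frac{\alpha_Ep_E}{b_T}q_R\bigr)/c_E$, with one factor of $\beta$ already inside; exact telescoping therefore requires the coefficient of $\frac{(S-S_0)^2}{2S_0}$ in $U$ to be $A/\beta$ (equivalently, the $\beta$ in the displayed definition of $A$ is spurious; the paper's own derivation, which writes the source as $A\beta SR$ throughout, tacitly uses this reading, while its numerical check of $Ap_V<4$ uses the other). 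With $U=\frac{A}{\beta}\frac{(S-S_0)^2}{2S_0}+\frac{(V-V_0)^2}{2V_0}+\frac{H}{b_Tc_E}$ one gets $-\frac{A}{\beta}\,\beta(S-S_0)R+ASR=AS_0R$ with no remainder, and since $AS_0+\sigma q_E\beta V_0=\frac{D}{c_E}\mathcal{R}_0$, this yields $\dot U=f(S,V)-\frac{A}{S_0}(S-S_0)^2R-\sigma q_E\beta\frac{(V-V_0)^2}{V_0}R+\frac{D}{c_E}(\mathcal{R}_0-1)R\le 0$ for $\mathcal{R}_0\le 1$ --- exactly the paper's argument, with the smallness hypothesis read as $(A/\beta)p_V<4$. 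Your positivity discussion and LaSalle endgame then go through, with one small caveat you share with the paper: if $\alpha_Ec_E=0$ then $H=0$ no longer forces $T=0$, and the invariant-set argument needs the extra observation that $E'=\alpha_ET\equiv 0$ and $T'=-b_TT$ on the set $\{S=S_0,\,V=V_0,\,R=0\}$.
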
 

\begin{proof}
%%%%%%%%%%%%%%%%%%%%%%%%%%%%%%%%%%%%%%%%%%%%%%%
We study the stability of $ x _0 $ by taking the  Lyapunov function   
   \[
       U = A\, 
       \frac{ (S- S_0)^2}{2S_0} +
       \frac{ (V - V _0) ^2 } { 2 V _0 }   + \frac{ H } { b _T c _E } 
   \]
where  $ H $ is the function defined by 
\[ H(E,R,T) =  b _T c _E E + (b _T b _E - \alpha _E p _E) R + \alpha _E c _E T, \]
with 
\[Q= b _T b _E - \alpha _E p _E=(\mu + \delta) (\mu + d _E + c _E) + \mu p _E + k \delta p _E >0,  \]
and 
\[ A = \beta \frac{ c _E q _E + b _E q _R - \frac{\alpha _E p _E}{b _T } q _R }{c _E } \] 
is a positive constant since 
\[ c _E q _E + \frac{ q _R}{b _T }  (b _E b _T - \alpha _E p _E) = c _E q _E + \frac{ q _R}{ b _T }  Q>0. \]
%\fi
%%%%%%%%%%%%%%%%%%%%%%%%%%%%%%%%
Using  \eqref{eqn:S0V0} we obtain
\begin{equation}\label{eqn:Lambda} 
\begin{aligned}
   p_V\Lambda  & = b _V V _0 \\
   p _S\Lambda & = \mu S _0 - c _V V _0. 
\end{aligned} 
\end{equation}
Differentiating $ U $ with respect to $ t $ along the  trajectories of the system \eqref{eqn:modelb}, and using both equations in \eqref{eqn:Lambda}  to rewrite $ p _V \Lambda $ and $ p _S \Lambda  $  yields 
\begin{align*}
    U'  =&  A \frac{ (S - S _0 )} { S _0 }  S' + \frac{ (V - V _0) } { V _0 }   V' 
    + \frac{ b _T c _E E' + (b _T b _E - \alpha _E p _E) R' + \alpha _E c _E T' } { b _T c _E }
    \\
    =&  f (S, V) - A \beta \frac{ (S - S _0) ^2 } { S _0 } R - \sigma \beta q _E \frac{ (V - V _0) ^2 } { V _0 } R   - A \beta (S - S _0) R \\
    & - \sigma q _E \beta (V - V _0) R  + A \beta S R + \sigma q _E \beta V R + \frac{ D  } {  b _T c _E } \, R   \\
    \leq &  f (S, V)   - A \beta (S - S _0) R  - \sigma q _E \beta (V - V _0) R  + A \beta S R + \sigma q _E \beta V R + \frac{ D  } {  b _T c _E } \, R   \\
    = & f (S, V)  + \left[A \beta S_0  + \sigma q _E \beta V_0  + \frac{ D  } { b _T c _E } \right]\, R \\
    = & f (S, V)  + \frac{ D } { b _T c _E } \left[1+ b _T  c _E \frac{ A \beta S_0  + \sigma q _E \beta V_0}{D }  \right]\, R\\
    = &  f (S, V)  + \frac{ D } { b _T c _E } \left[1 
   - \mathcal{R} _0  \right]\, R 
    \end{align*} 
where $ f (S, V) =  - A\, \mu \frac{ (S - S _0) ^2 } { S_0 } - b_V \frac{ (V - V _0) ^2 } { V_0 } + A\, c _V \frac{ (S - S _0) (V - V _0) } { S_0 }  $.
%where $ D = b _E b _R - c _E c _R - \frac{ \alpha _E } { b _T } (p _E b _R + c _E p _R) $.
It remains to show that $ f (S, V) \leq 0 $ and $ f (S, V) = 0 $ if and only if $ S = S _0 $ and $ V = V _0 $.

Note that the Hessian matrix  of $ f (S, V) $ is 
\[
    \begin{bmatrix}
        -\frac{2 A \mu }{S_0} & \frac{ A c_V } { S _0 }  \\[10pt]
        \frac{A c _V } { S _0 }  &  - \frac{ 2 b _V } { V _0 } 
    \end{bmatrix} 
\]
and its determinant is $ - \frac{ A(A V _0 c _V ^2 - 4 S _0 b _V \mu) } { S _0 ^2 V _0 } $. By the second derivative test, $ (S,V) = (S_0,V_0) $ is  a maximum when the  determinant is  positive, and  $-\frac{2 A \mu }{S_0} <0 $. The latter inequality holds, hence,  it only remains to study the determinant. 
Since $ A >0 $ we must have $ -A V _0 c _V ^2 + 4 S _0 b _V \mu>0 $. Substituting $ S _0 $ and $ V _0 $ in this expression we obtain 
\[
    -A V _0 c _V ^2 + 4 S _0 b _V \mu=\frac{ \Lambda } { b _V } \left[ - A p _V c _V ^2 + 4 b _V c _V + 4 \mu p _S b _V \right] 
\]
Since $ b _V = c _V + \mu $  the expression inside the bracket reduces to 
\[
    (4-A p _V ) c _V ^2 + 4 \mu c _V + 4 \mu p _S p _V,
\]
which proves that the determinant is always positive if $ p _V A < 4 $. 

In this case, for $ \mathcal{R} _0 \leq 1 $, we have $ U' \leq 0 $, with equality if and only if
$ S = S _0 $ and $ V = V _0 $, and $ R = 0 $. The largest invariant set for which $ U' = 0 $, then consists of just the equilibrium $ x _0 $. 
The theorem then follows from  LaSalle's Invariance Principle. 
\end{proof} 
Note that, Theorem \ref{thm:x0} holds when the parameters satisfy the inequality $ A p _V < 4 $. This does not seem to pose a substantial restriction in the allowable value of the constants, since $ A p _V $  is a small number for any reasonable choice. For instance if we choose the parameters as in Figure 3, then $ A p _V = 6.633301503 \times 10^{-7} $. It may be of interest to see if it is possible to remove the restriction $ A p _V < 4 $ by using a different Lyapunov function. 

\section{Endemic Equilibrium}
We now look for equilibria of  \eqref{eqn:modelb} for which at least one of the populations  $ E ^\ast , R ^\ast, T ^\ast  $ and $ V ^\ast $ 
is different from zero. We call such point an {\it endemic equilibrium} and we denote it by  
$ x ^\ast = (S ^\ast , E ^\ast , R ^\ast, T ^\ast, V ^\ast  ) \in \mathbb{R} ^5_{>0} $.
The  endemic equilibria of \eqref{eqn:modelb} are given by the following system of equations
\begin{equation}\label{eqn:system}
 \begin{aligned}
 &  p_S\Lambda  + c _V V - \mu S - \beta SR	=0	\\
  &  q_E\beta(S + \sigma V) R  - b _E  E + c_R R +(1-k) \, \delta T=0		\\
  &  q_R \beta SR + c_E E - b _R  R=0\\
 &  p _E E + p _R R - b _T T  =0\\
 &  p _V \Lambda - b _V V - \sigma  q _E \beta V R =0.
 \end{aligned}
\end{equation}
Solving  the first, third, fourth and fifth equation in \eqref{eqn:system} and treating $ R^* $ as a parameter  we obtain the following
\begin{equation}\label{eqn:solution}
    \begin{aligned}
       S ^\ast & = \frac{ \Lambda } { \mu + \beta R ^\ast } \left( p _S + \frac{ c _V p _V } { b _V + \sigma q _E \beta R ^\ast } \right) \\
       E ^\ast & = \omega R ^\ast \\
       T ^\ast & = \frac{ p _E \omega + p _R } { b _T } R ^\ast \\
       V ^\ast & = \frac{ p _V \Lambda } { b _V + \sigma q _E \beta R ^\ast },
    \end{aligned} 
\end{equation} 
where 
\[\omega = \frac{ b _R - q _R \beta S ^\ast } { c _E } = 
    \frac{b _R b _T q _E + b _T c _R q _R + p _R q _R \alpha _E  } { q _R (b _E b _T - \alpha _E p _E) + c _E q _E b _T  }>0,\]
since $ (b _E b _T - \alpha _E p _E) >0 $.
%Substituting the first expression above in the expressions for $ E ^\ast $ and $ T ^\ast $, yields $ S ^\ast , E ^\ast , T ^\ast $  and $ V ^\ast $ as a function of $ R ^\ast $. 
Substituting the expressions in \ref{eqn:solution}
  in the second equation of \eqref{eqn:system} yields the following equation for $ R ^\ast $ 
\[
     R ^\ast \left( \alpha  _2 (R ^\ast ) ^2   + \alpha  _1 R ^\ast + \alpha  _0 \right)  = 0 
 \]
 where
 \begin{align*}
     \alpha  _2 & = - \beta ^2 q _E \sigma D b _T \\
     \alpha  _1 & = - \beta b _T (b _V + \mu \sigma q _E) D + \sigma (((b _E q _R + c _E q _E )b _T +  (k - 1) \delta  q _R p _E p_S + b _T c _E p _V )) \beta ^2 q _E \Lambda  \\ 
     \alpha  _0& = b_T b_V \mu D \left( \mathcal{R} _0 - 1 \right)
 \end{align*}  
This equation has solution if either $ R ^\ast = 0 $ or 
\begin{equation}\label{eqn:quadratic}
    \Phi (R ^\ast) =  \alpha  _2 (R ^\ast ) ^2   + \alpha  _1 R ^\ast + \alpha   _0 =0.
\end{equation}
The  case $ R ^\ast = 0 $  must be excluded since   it yields a solution with $ E ^\ast = R ^\ast = T ^\ast=0 $,
which was already known.   Consider equation \eqref{eqn:quadratic}. Clearly, $ \alpha _2<0 $  (since $ D >0 $) and $\alpha _0 >0 $ whenever $ \mathcal{R} _0 > 1 $. It follows that   \eqref{eqn:quadratic} has a unique positive root $ R ^\ast $ if  $ \mathcal{R} _0 > 1 $. 
Since $ \Delta $ is attracting within $ \mathbb{R}  ^5 _{ \geq 0} $ we have that $ R 
^\ast \in (0 , \Lambda / \mu) $. 

%By Proposition \ref{prop:invariant_region} we have that $ R 
%^\ast \in (0 , \Lambda / \mu) $. 

Note that we can also find  an useful  formula for  $ S ^\ast $ as a function of $ V ^\ast $.  This formula can be obtained by substituting the equations for $ E ^\ast $ and $ T ^\ast $ given in \eqref{eqn:solution} into the second line of \eqref{eqn:system}, and solving for $ S ^\ast $. This gives
\begin{equation} \label{eqn:S_ast}
    S ^\ast = \frac{ b _E b _R - c _E c _R - \frac{ \alpha _E } { b _T } (p _R c _E + p _E b _R) - q _E \beta \sigma c _E V ^\ast } { \beta (c_Eq _E + b _E q _R - \frac{ \alpha _E p _E } { b _T } q _R )} . 
\end{equation} 

\begin{theorem}\label{thm:x_ast} The endemic equilibrium $ x ^\ast $ of \eqref{eqn:modelb} is globally
    asymptotically stable on $ \mathbb{R}   _{ >0 } ^5 $, whenever $ \mathcal{R} _0 >1 $.
\end{theorem}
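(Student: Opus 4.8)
The plan is to construct a Goh--Volterra (logarithmic) Lyapunov function centered at $x^\ast$, rather than the quadratic-plus-linear function used at $x_0$. The switch is forced by two facts: since $E^\ast,R^\ast,T^\ast,V^\ast>0$ the linear function $H$ of \thmref{thm:x0} no longer attains its minimum at the equilibrium, and since the mass-action fluxes $\beta S^\ast R^\ast$ and $\sigma q_E\beta V^\ast R^\ast$ are now nonzero, a quadratic term in $S$ or $V$ would generate cubic deviation terms that cannot cancel against the infected compartments. Writing $g(x)=x-1-\ln x\ge 0$, with equality iff $x=1$, I would set
$$ L = a_S\,S^\ast g\!\left(\frac{S}{S^\ast}\right) + a_V\,V^\ast g\!\left(\frac{V}{V^\ast}\right) + a_E\,E^\ast g\!\left(\frac{E}{E^\ast}\right) + a_R\,R^\ast g\!\left(\frac{R}{R^\ast}\right) + a_T\,T^\ast g\!\left(\frac{T}{T^\ast}\right), $$
with positive weights $a_S,\dots,a_T$ to be fixed below. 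Each summand is nonnegative on $\mathbb{R}^5_{>0}$ and vanishes only at $x^\ast$, so $L$ is a legitimate candidate.

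Differentiating along \eqref{eqn:modelb} gives $L'=\sum a_i(1-x_i^\ast/x_i)x_i'$. I would first eliminate the constant recruitment and transfer rates by substituting the equilibrium identities \eqref{eqn:system} (equivalently the closed forms \eqref{eqn:solution} and \eqref{eqn:S_ast}) into the $-\mu S$, $-b_E E$, $-b_R R$, $-b_T T$, $-b_V V$ terms. After this substitution the diagonal parts collapse to the manifestly non-positive dissipation terms $-a_S\mu S^\ast(s-1)^2/s$ and $-a_V b_V V^\ast(v-1)^2/v$, writing $s=S/S^\ast$ and $v=V/V^\ast$, while every remaining contribution is a flux of the form (rate)$\times$(ratio). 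The weights $a_E,a_R,a_T$ would then be chosen by flux balance on the infected subgraph ($E\leftrightarrow R$, $E\leftrightarrow T$, $R\to T$), the analogue of choosing the coefficients of $H$, so that the internal transfer fluxes among $E,R,T$ pair up and each surviving mass-action flux combines with its reversed copy into a negative multiple of a $g$-term; this is the graph-theoretic cancellation of Shuai--van den Driessche. The remaining freedom in $a_S,a_V$ is used to match the infection inflow $q_E\beta(S+\sigma V)R$ into $E$ against the outflows from $S$ and $V$.

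The main obstacle is the $S$--$V$ coupling. The influx $c_V V$ in the $S$-equation has no symmetric counterpart, since the $V$-equation receives no feedback from the infected classes, and after substitution it produces the indefinite cross term $a_S c_V V^\ast(s-1)(v-1)/s$, which is not absorbed by the two diagonal dissipation terms in the naive way because of the mismatched denominators $s$ and $v$. Forcing this term, together with the shared inflow $q_E\beta(S+\sigma V)R$ that mixes $S$ and $V$ inside a single flux into $E$, into a non-positive combination of $g$-terms is the delicate calibration, and is presumably what pins down $a_S$ and $a_V$; this is exactly the structural difference from the acyclic vaccination--treatment diagram of Yang et al.\ flagged in the introduction. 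A secondary complication is the feedback edge $(1-k)\delta T$ from $T$ to $E$, which closes the cycle $E\to T\to E$ and forces the $T$-flux into the infection balance rather than letting $T$ act as a pure sink.

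Once $L'\le 0$ is established on $\mathbb{R}^5_{>0}$ with equality forcing $S=S^\ast$, $V=V^\ast$, and, through the $g$-term arguments, $E=E^\ast$, $R=R^\ast$, $T=T^\ast$, I would invoke LaSalle's invariance principle on the compact attracting set $\Delta$: the largest invariant subset of $\{L'=0\}$ is the singleton $\{x^\ast\}$, which is the unique endemic equilibrium by the uniqueness of the positive root of \eqref{eqn:quadratic} for $\mathcal{R}_0>1$. Global asymptotic stability on $\mathbb{R}^5_{>0}$ then follows. I expect that no analogue of the restriction $A p_V<4$ will be needed, precisely because the logarithmic terms, unlike the quadratic ones of \thmref{thm:x0}, do not impose a Hessian sign condition.
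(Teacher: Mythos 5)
Your overall strategy coincides with the paper's: the same Goh--Volterra function $W=\sum_i a_i x_i^\ast\, g\!\left(x_i/x_i^\ast\right)$ (the paper normalizes $a_S=1$), cancellation of fluxes by matching weights, and LaSalle with the singleton $\{x^\ast\}$; your prediction that no analogue of $A p_V<4$ is needed is also borne out. The flux-balance choice of $a_E,a_R,a_T$ on the infected block does work essentially as you describe: in the paper it appears as the overdetermined system in the first five lines of \eqref{eqn:monster}, whose consistency is itself not free --- it holds only because $S^\ast$ satisfies the equilibrium relation \eqref{eqn:S_ast} --- and it forces $a_1=a_4$, i.e.\ your $a_E=a_V$.

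The genuine gap is exactly where you write that taming the $S$--$V$ cross term ``is presumably what pins down $a_S$ and $a_V$'': no choice of those two weights does it, and this is the step on which the theorem actually turns. After substituting the equilibrium identities, $W'$ contains the indefinite monomials $v$ and $v/w$ coming from $c_V V$ and $-c_V V S^\ast/S$, and these cannot be paired against the diagonal dissipation or against each other; the paper resolves this by the algebraic method of \cite{li2012algebraic}, enumerating all subsets of the monomial set $\mathcal{D}$ with product one and adjoining the two nonstandard AM--GM groupings $\left(3-w-\frac{1}{v}-\frac{v}{w}\right)$ and $\left(4-\frac{1}{v}-\frac{v}{w}-\frac{x}{y}-\frac{wy}{x}\right)$. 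Existence of \emph{nonnegative} coefficients then reduces to an underdetermined linear block in $b_2,b_6,b_8,b_9$: with $b_8=t$, positive solutions exist iff $\max(0,d_2-d_3)<t<\min(d_2,d_4)$, and nonemptiness of this window rests on the consistency identity $d_5-d_3+d_2-d_4=0$ (again an equilibrium relation) together with $d_2,d_4>0$, the former requiring $a_2>0$. An off-the-shelf Shuai--van den Driessche tree argument does not cover this configuration, precisely because the $c_V V$ feedback from $V$ to $S$ and the cycle $E\to T\to E$ put the model outside the acyclic vaccination--treatment setting --- the very complication the paper flags in comparing with Yang et al. As written, your proposal correctly identifies the obstacle and the right family of Lyapunov functions, but it asserts rather than proves that the AM--GM bookkeeping closes; the verification of that closure is the substantive content of the paper's proof.
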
 
\begin{proof}
    Consider the Lyapunov function
    \[
        W = S ^\ast g \left( \frac{ S } { S ^\ast } \right) + a _1 E ^\ast g \left( \frac{ E } { E ^\ast } \right) + a _2 R ^\ast g \left( \frac{ R } { R ^\ast } \right) + a _3 T ^\ast g \left( \frac{ T } { T ^\ast } \right) + a _4 V ^\ast g \left( \frac{ V } { V ^\ast } \right) 
    \]
with $ g (x) = x - 1 - \ln{x} $, and $ a _i >0 $ ($ i = 1, \ldots , 4 $), where the $ a _i $s  are constants to be determined. 
Clearly, $ W $ is $ C ^1 $, $ W (x ^\ast) = 0 $ , and $ W >0 $ for any $ p \in \mathbb{R}  ^5 _{ >0 } $ such that $ p \neq  x ^\ast $. 
Computing the derivative of  $ W $ along the  solutions of \eqref{eqn:modelb} gives 
\begin{align*} 
    W'  =&  \left( 1 - \frac{ S ^\ast } { S } \right) S' + a _1 \left( 1 - \frac{ E ^\ast } { E } \right) E' + a _2 \left( 1 - \frac{ R ^\ast } { R } \right) R' + a _3 \left( 1 - \frac{ T ^\ast } { T } \right) T' + a _4 \left( 1 - \frac{ V ^\ast } { V } \right) V '\\
     =&  \left( 1 - \frac{ S ^\ast } { S } \right) [p _S \Lambda + c _V V - \mu S - \beta SR]\\
     & + a _1 \left( 1 - \frac{ E ^\ast } { E } \right) [q _E \beta (S + \sigma V) R - b _E E + c _R R + (1 - k) \delta T  ]\\
    & +  a _2 \left( 1 - \frac{ R ^\ast } { R } \right)[q _R \beta S R + c _E E - b _R R] + a _3 \left( 1 - \frac{ T ^\ast } { T } \right) [p _E E + p _R R - b _T T]\\
   & +    a _4 \left( 1 - \frac{ V ^\ast } { V } \right) [p _V \Lambda   - b _V V - \sigma  q _E \beta V R]\\
   = & C 
  - (\mu + a _2 \beta q _R R ^\ast) S + (a _1 q _E + a _2 q _R - 1) \beta SR + (- a _1 b _E  + a _2 c _E  + a _3 p _E ) E \\
  & + (S ^\ast \beta + a _1 c _R - a _2 b _R  + a _3 p _R  + a _4 \sigma q _E \beta V ^\ast  ) R  +(a _1 \alpha _E - a _3 b _T)T  +(c _V - a _4 b _V )V\\
 &  + q _E \beta \sigma (a _1  - a _4  )VR- p _S \Lambda \frac{ S ^\ast } { S } - a _3 p _E T ^\ast \frac{ E } { T } -a _2 c _E R ^\ast \frac{ E } { R } - a _3 p _R T ^\ast \frac{ R } { T } - a _1 \alpha _E E ^\ast \frac{ T } { E }\\
&  - a _1 c _R E ^\ast \frac{ R } { E } - a _1  \beta q _E E ^\ast \frac{ SR } { E } - a _1 \sigma q _E \beta E ^\ast \frac{ V R } { E } - a _4 p _V \Lambda \frac{ V ^\ast } { V } - c _V V \frac{ S ^\ast } { S }. 
\end{align*} 
where $ C = \Lambda + \mu S ^\ast + a _1  b _E E ^\ast + a _2 b _R R ^\ast + a _3  b _T T ^\ast + a _4 b _V V ^\ast $.
 For simplicity denote $ w = \frac{ S } { S ^\ast } $, $ x = \frac{ E } { E ^\ast } $, $ y = \frac{ R } { R ^\ast } $, $ z = \frac{ T } { T ^\ast } $, and $ v = \frac{ V } { V ^\ast } $.
Then
\begin{align*}
   W' = & C 
  - (\mu + a _2 \beta q _R R ^\ast) S ^\ast w + (a _1 q _E + a _2 q _R - 1) \beta S ^\ast R ^\ast wy  + (- a _1 b _E  + a _2 c _E  + a _3 p _E ) E ^\ast x  \\
  & + (S ^\ast \beta + a _1 c _R - a _2 b _R  + a _3 p _R  + a _4 \sigma q _E \beta V ^\ast  ) R ^\ast y  +(a _1 \alpha _E - a _3 b _T)T ^\ast z   +(c _V - a _4 b _V )V ^\ast v\\
  & + q _E \beta \sigma (a _1  - a _4  )V ^\ast R ^\ast v y- p _S \Lambda \frac{ 1} { w } -a _3  p _E 
 E ^\ast \frac{ x } { z }  -a _2 c _E E ^\ast \frac{ x } { y }  - a _3 p _R R^\ast \frac{ y } { z } - a _1 \alpha _E T ^\ast \frac{ z } { x }  \\
 & - a _1 c _R R ^\ast \frac{ y } { x } - a _1  \beta q _E S ^\ast R ^\ast  \frac{ wy } { x } - a _1 \sigma q _E \beta V ^\ast R ^\ast \frac{ vy } {x}  - a _4 p _V \Lambda \frac{ 1 } { v } - c _V V ^\ast \frac{ v } { w }. 
\end{align*} 
Following the method used in \cite{li2012algebraic} and \cite{lamichhane2015global} we  introduce the   following set
\[
    \mathcal{D} = \left \{v, w,x,y,z,vy, wy, \frac{ 1 } { v } ,  \frac{ 1 } { w },  \frac{ v } { w } ,\frac{ x } { z } , \frac{ x } { y } , \frac{ y } { z } , \frac{ z } { x } , \frac{ y } { x } , \frac{ v y } { x } ,  \frac{ wy } { x } \right \}. 
\]
We now list all the subsets of $ \mathcal{D} $ for which the product of all functions within each subset is equal to one
%There are at most nine subsets of $ \mathcal{D} $ such that the product of all functions within the subset
%is equal to one, they are
\begin{multline}\label{eqn:subsets}
    \left \{ v, \frac{ 1 } { v } \right \}, \left \{ w, \frac{ 1 } { w } \right \}, \left \{ \frac{ x } { y } , \frac{ y } { x } \right \}, \left \{ \frac{ x } { z } , \frac{ z } { x } \right \},\\ \left \{ \frac{ 1 } { v } , \frac{ vy } { x } , \frac{ x } { y } \right \}, \left \{ \frac{ 1 } { w } , \frac{ wy } { x } , \frac{ x } { y } \right \} ,
    \left \{ \frac{ z } { x } , \frac{ y } {z } , \frac{ x } { y } \right \}, \left \{ w,  \frac{ 1 } { v } , \frac{ v } { w }  \right \}, \left \{ \frac{ 1 } { v }, \frac{ v } { w }  , \frac{ x } { y } , \frac{ wy } { x } \right \} .   
\end{multline}
We associate the following terms to the subsets in equation \eqref{eqn:subsets}:

%To these  subsets of variables we associate the following terms 
\begin{align*} 
     & \left( 2 - v - \frac{ 1 } { v } \right), \left( 2 - w - \frac{ 1 } { w } \right), \left( 2 - \frac{ x } { y } - \frac{ y } { x } \right), \left( 2 - \frac{ x } { z } - \frac{ z } { x } \right),\\
   & \left( 3 - \frac{ 1 } { v } - \frac{ vy } { x } - \frac{ x } { y } \right), \left( 3 - \frac{ 1 } { w } - \frac{ wy } { x } - \frac{ x } { y } \right) , \left( 3 - \frac{ z } { x } - \frac{ y } { z } - \frac{ x } { y } \right), \left( 3 -w- \frac{ 1 } { v }  - \frac{ v } { w } \right), \\
   & \left(4 - \frac{ 1 } { v } - \frac{ v } { w } - \frac{ x } { y } - \frac{ wy } { x }  \right)  .   
\end{align*} 
As in  \cite{li2012algebraic}, we define a Lyapunov function using the terms above:
\begin{multline}
   H (v, w,x, y ,z) =   b _1  \left( 2 - v - \frac{ 1 } { v } \right) + b _2  \left( 2 - w - \frac{ 1 } { w } \right)
   + b _3 \left( 2 - \frac{ x } { y } - \frac{ y } { x } \right)\\
   + b _4  \left( 2 - \frac{ x } { z } - \frac{ z } { x } \right)
   + b _5 \left( 3 - \frac{ 1 } { v } - \frac{ vy } { x } - \frac{ x } { y } \right)\\
  + b _6  \left( 3 - \frac{ 1 } { w } - \frac{ wy } { x } - \frac{ x } { y } \right) 
 + b _7  \left( 3 - \frac{ z } { x } - \frac{ y } { z } - \frac{ x } { y } \right)\\
 +  b _8 \left( 3 -w- \frac{ 1 } { v }  - \frac{ v } { w } \right),+ b _9 \left(4 - \frac{ 1 } { v } - \frac{ v } { w } - \frac{ x } { y } - \frac{ wy } { x }  \right) 
\end{multline} 
where  $ b _1 , \ldots , b _9 $ are unknown constants. 
We look for solutions of the equation $ G (v, x,y,z) = H (v , w, x, y , z) $ with  $ a _i >0 $ ($ i = 1, \ldots , 4 $)  and $ b _k \geq 0 $ 
($k = 1, \ldots , 9 $)

 Setting  like terms in $ G $ and $ H $ equal  yields:
\begin{equation}\label{eqn:monster} 
\begin{aligned}
  vy: & \quad q _E \beta \sigma (a _1 - a _4) = 0 \\ 
   wy: & \quad a _1 q _E + a _2 q _R - 1 = 0\\
    x: & \quad - a _1 b _E + a _2 c _E + a _3 p _E = 0 \\
    y: & \quad S ^\ast \beta + a _1 c _R - a _2 b _R + a _3 p _R + a _4 \sigma q _E \beta V ^\ast=0 \\
    z : & \quad a _1 \alpha _E - a _3 b _T = 0 \\[10pt]
   v:& \quad b _1 = - (c _V - a _4 b _V) V ^\ast \\
   y x ^{ - 1 } :& \quad b _3 = a _1 c _R R ^\ast \\
   x z ^{ - 1 } :&  \quad b _4 = a _3 p _E E ^\ast \\
   vy x ^{ - 1 } :& \quad  b _5 = a _1 \sigma q _E \beta V ^\ast R ^\ast \\
   y z ^{ - 1 } :& \quad b _7 = a _3 p _R R ^\ast \\
   z x ^{ - 1 } :& \quad b _4 + b _7 = a _1 \alpha _E T ^\ast \\[10pt]
   v ^{ - 1 } :&  \quad b _1 + b _5+b_8+b_9 = a _4 p _V \Lambda \\
   x y ^{ - 1 } :& \quad b _3 + b _5 + b _6 + b _7+b_9 = a _2 c _E E ^\ast \\[10pt]
   v ^0:&\quad 2 (b _1 + b _2 + b _3 + b _4) + 3 (b _5 + b _6 + b _7+ b _8 )+4 b _9 = C\\
   w:& \quad b _2+b_8 = (\mu + a _2 \beta q _R R ^\ast) S ^\ast \\  
   w ^{ - 1 } :&  \quad b _2 + b _6 = p _S \Lambda \\
    c w ^{ - 1 } : & \quad b _8 + b _9 =  c _V V ^\ast\\
    wy x ^{ - 1 } :& \quad b _6+b_9 = a _1 \beta q _E S ^\ast R ^\ast \\
\end{aligned}
\end{equation} 
The first five lines of \eqref{eqn:monster} form an overdetermined  consistent system of five equations in the variables $ a _1 ,\ldots , a _4 $, with solution 
\begin{align*}
    a _1 & = a _4 =  \frac{c_E}{ c _E q _E+ b _E q _R - \frac{ p _E } { b _T } \alpha _E q _R}>0\\ 
     a _2 & = \frac{ 1 } { q _R } -  \frac{ \frac{q_E}{q_R}c _E  }{ c _E q _E+ b _E q _R - \frac{ p _E } { b _T } \alpha _E q _R}>0\\ 
     a _3 & = \frac{\frac{c_E\alpha_E}{b_T}}{ c _E q _E+ b _E q _R - \frac{ p _E } { b _T } \alpha _E q _R}>0.
\end{align*}
The following six lines of \eqref{eqn:monster}  form  an overdetermined  consistent system of six equations in the variables $ b _1 ,b_3 , b _4,b_5,b_7 $, with a unique solution where $ b _1 ,b_3 , b _4,b_5,b_7 >0 $. The next two lines can be easily seen to be consistent with the other equations,  but because they are dependent from other equations they can be discarded. 

The last group of equations in \eqref{eqn:monster}  is a system of  five equations in the variables $ b _2, b_6, b_8$ and $ b_9 $.
It remains to prove that this system has a solution with $   b _2, b_6, b_8,  b_9>0 $. The augmented matrix of the system given by the last five equations of \eqref{eqn:monster} is 
\[ \left[\begin{array}{rrrr|c}
    %\begin{bmatrix}
       2 & 3 & 3 & 4 & d _1  \\
       1 & 0 &  1 & 0 & d _2 \\
       1 & 1 & 0 & 0 &  d_3\\
       0 & 0 & 1 & 1 &  d _4  \\
       0 & 1 & 0 & 1 & d _5 
 % \end{bmatrix}  
 \end{array}\right]    
\]
where $ d_1= C -2 (b _1 + b _3 + b _4 )-3 (b _5 + b _7)   $, $ d _2 = (\mu + a _2 \beta q _R R ^\ast)S ^\ast  $, $ d _3 = p _S \Lambda $, $ d _4 = c _V V ^\ast $, and $ d _5 = a _1 \beta q _E S ^\ast R ^\ast $. Performing row operations we can reduce the augmented matrix to row echelon form  
\begin{align*}
    \left[\begin{array}{rrrr|c}
    %\begin{bmatrix}
       1 & 0 & 1 & 0 & d _2  \\
      0 & 1 & - 1 & 0 & d _3 - d _2 \\
       0 & 0 & 1 & 1 &  d_4\\
       0 & 0 & 0 & 0 & \frac{1}{4} (d _1 + d _2 - 3 d _3) - d _4  \\
       0 & 0 & 0 & 0 & d _5 - d _3 + d _2 - d _4  
 % \end{bmatrix}  
 \end{array}\right]
\end{align*}
A computations involving the last two rows of the augmented matrix above shows that the system is consistent, but since only the first three rows are independent it follows that the solution is not unique. Solutions of the system are given by
\begin{align*} 
b _2 & = d _2 - t\\
b _6 & = (d _3 - d _2) + t\\
b _9 & = d _4 - t
\end{align*} 
with $ b _8 = t $. For the solutions to be positive we must have 
$\max(0,d_2-d_3)<t<\min(d_2,d_4) $. Hence, for this system of equation to always have  a positive solution we must have that  $ d _2 - d _3 < d _2 $, $ d _2 - d _3 < d _4 $ and $ d _2 , d _4 >0 $. 
The first inequality is always satisfied since $ d _3 >0 $.  The second inequality holds since $ d _5 -d _3 + d_2 - d_4=0 $, so that $ - d _3 + d _2  -d_4 =-d_5<0 $. Clearly,  $ d_4 >0$. The fact that
$ d _2 >0 $ follows from the fact that $ a _2>0 $, which completes the proof. 
\end{proof}

\section{Numerical Simulations}
\begin{figure}[!h]
\begin{center}
\subfigure[]{\rotatebox{0}{\includegraphics[width=0.45 \textwidth,
height=45mm]{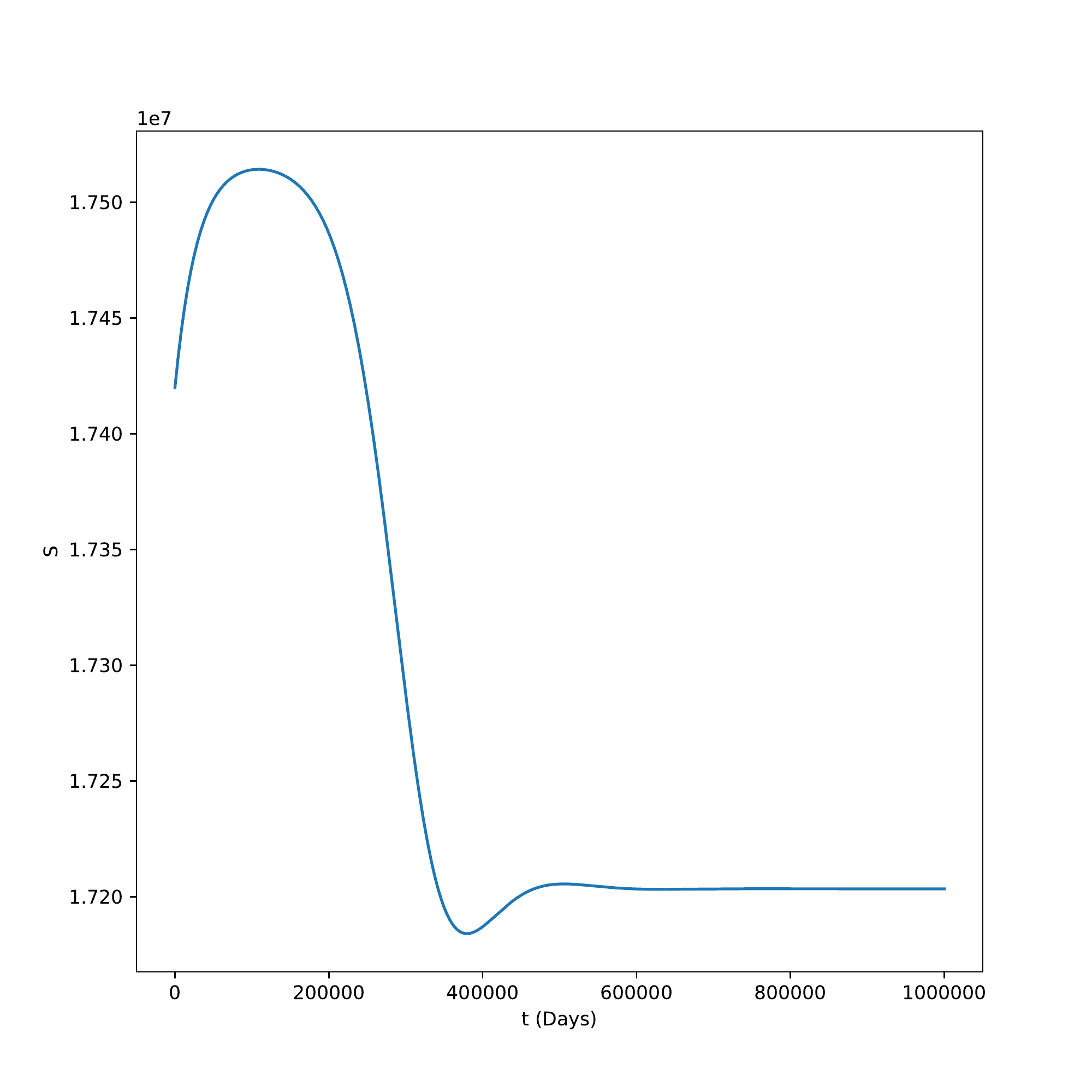}}}
\subfigure[]{\rotatebox{0}{\includegraphics[width=0.45 \textwidth,
height=45mm]{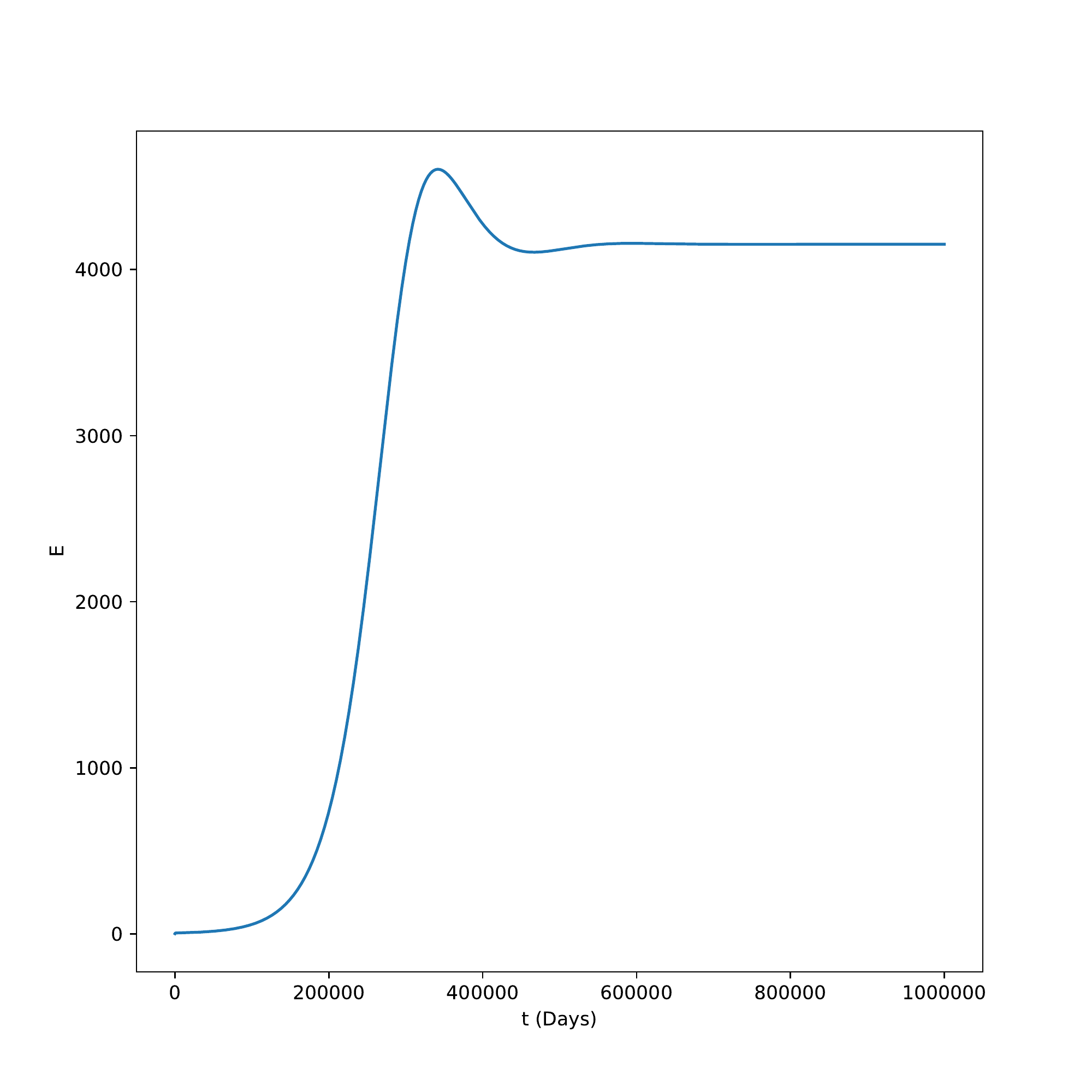}}}
\subfigure[]{\rotatebox{0}{\includegraphics[width=0.45 \textwidth,
height=45mm]{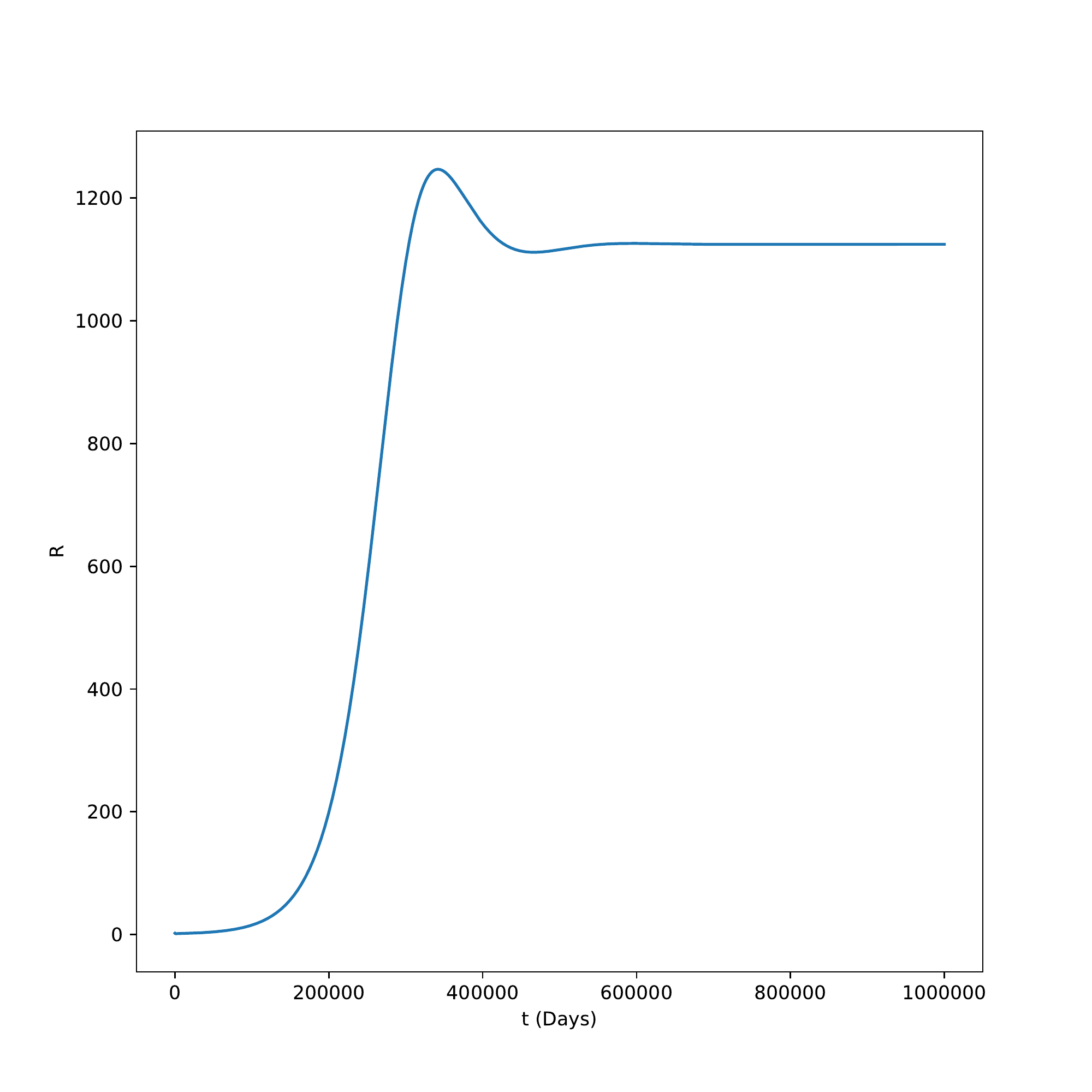}}}
\subfigure[]{\rotatebox{0}{\includegraphics[width=0.45 \textwidth,
height=45mm]{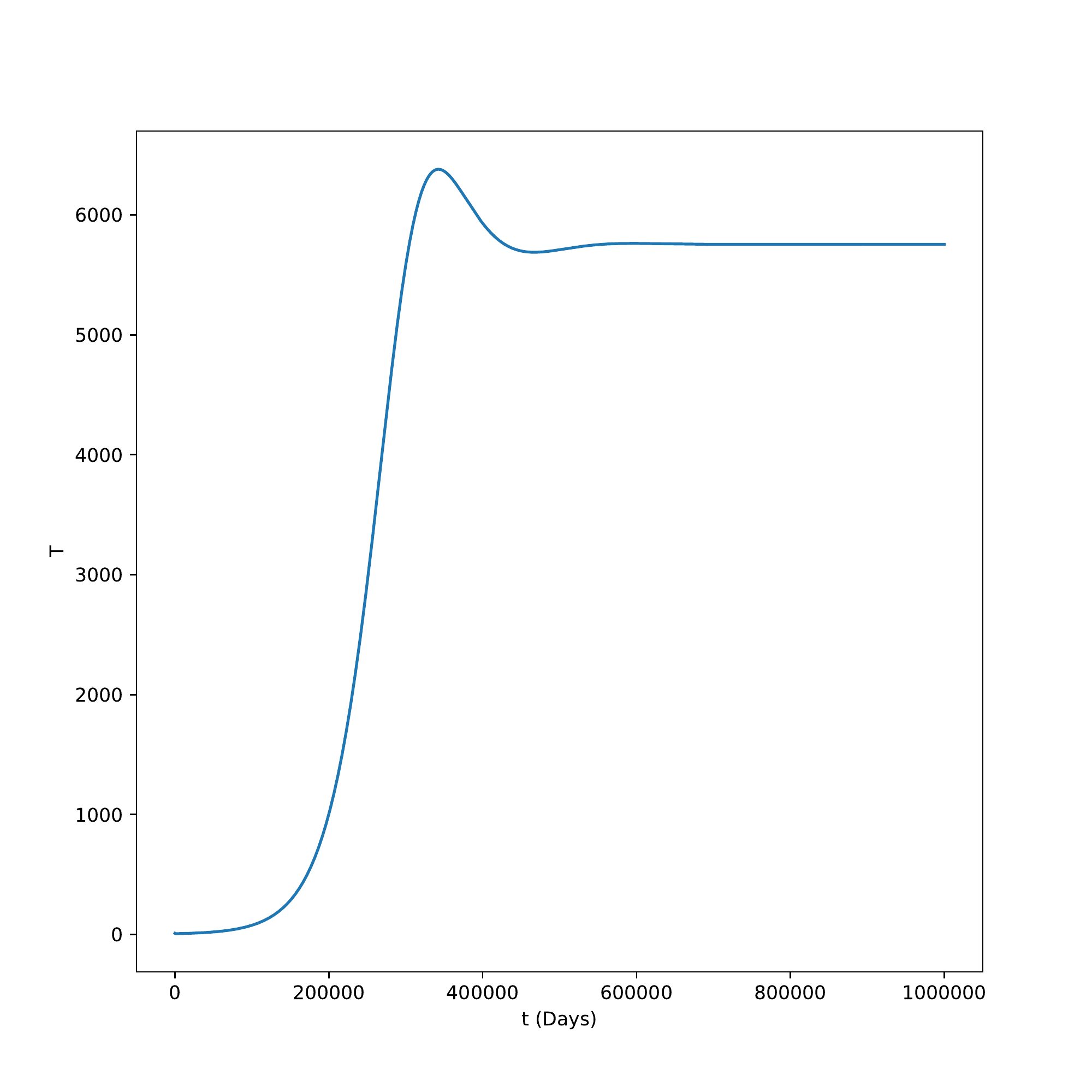}}}
\subfigure[]{\rotatebox{0}{\includegraphics[width=0.45 \textwidth,
height=45mm]{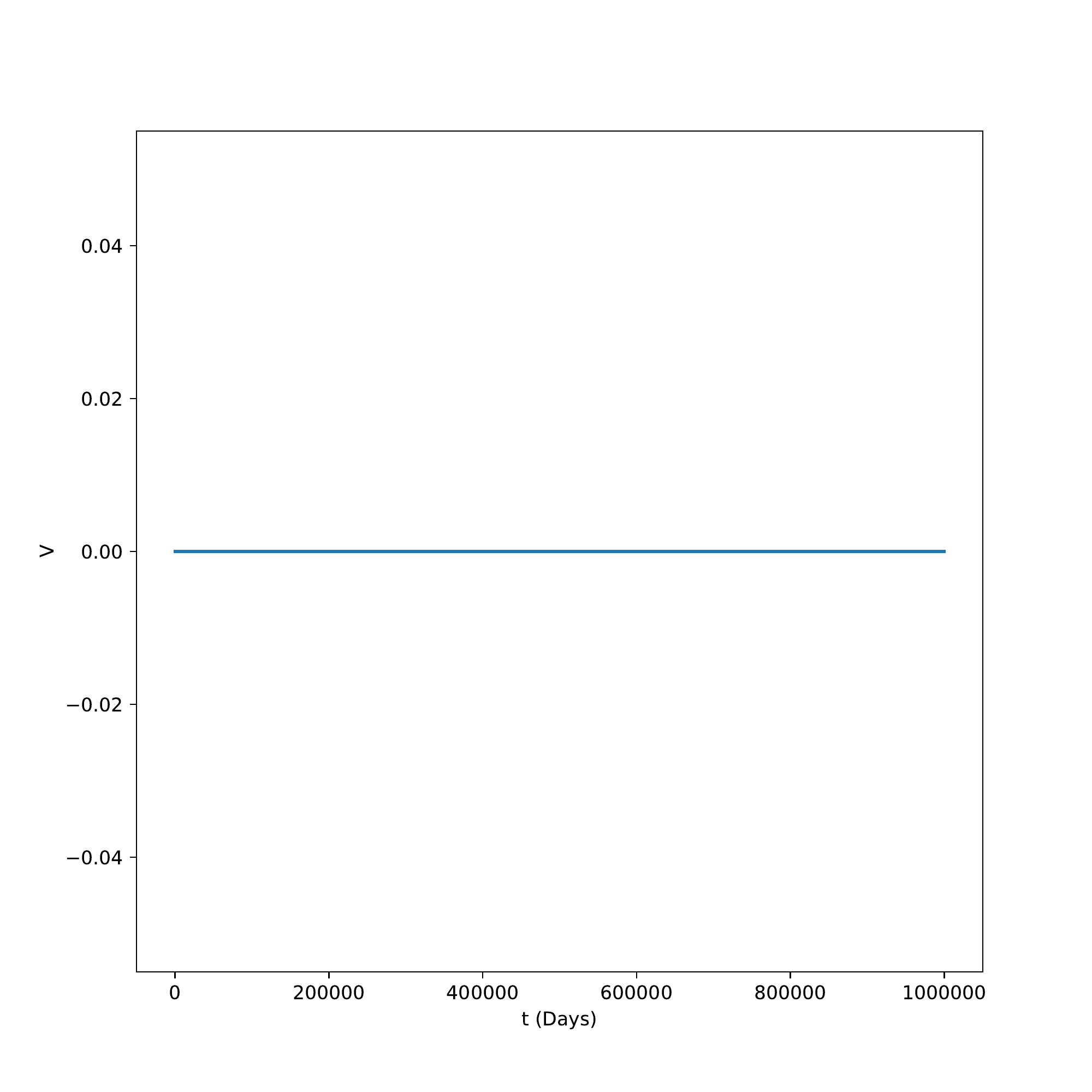}}}
\subfigure[]{\rotatebox{0}{\includegraphics[width=0.45 \textwidth,
height=45mm]{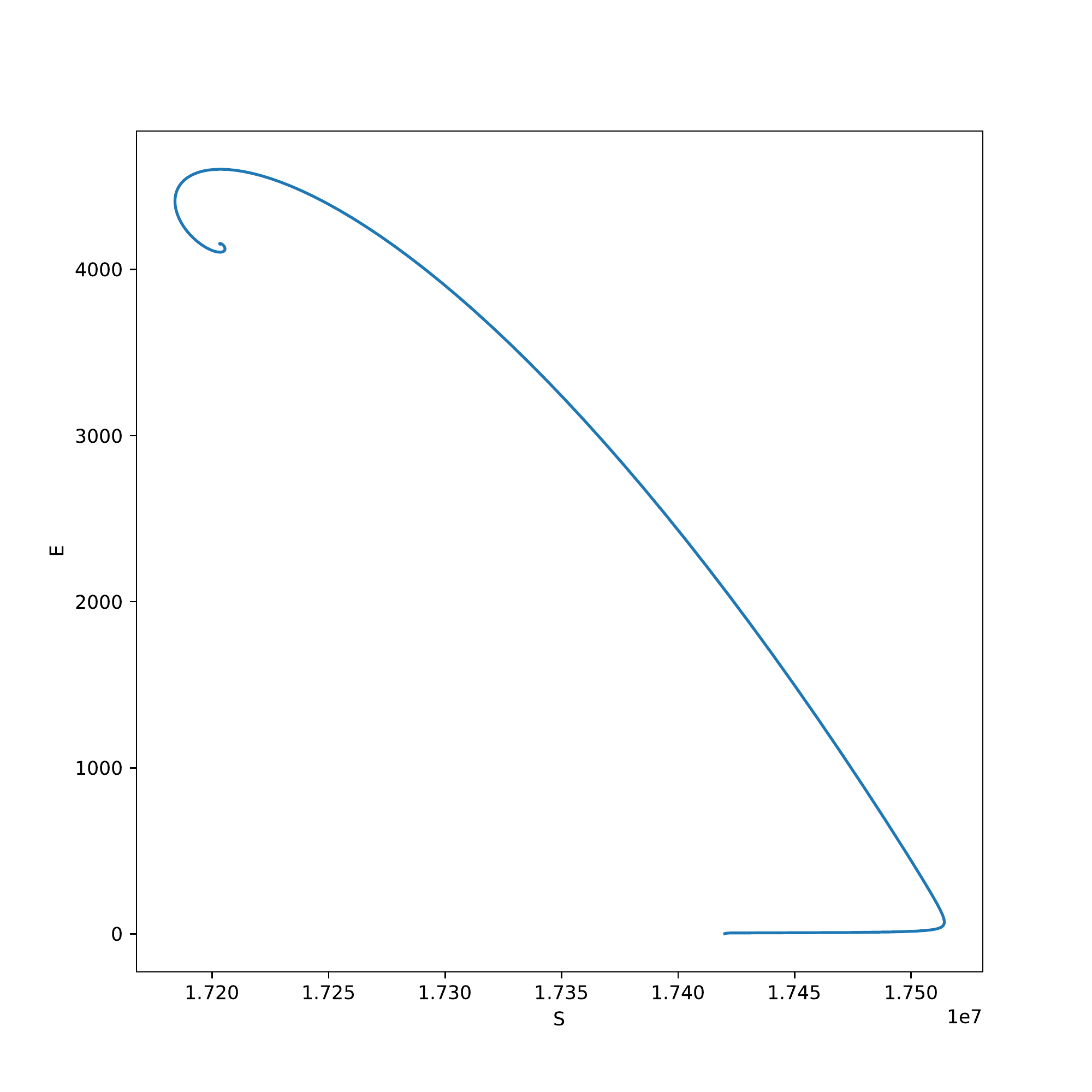}}}
 \vspace{-2mm}
 \caption{
\footnotesize Time history and phase portraits of system (\ref{eqn:modela}) for $\beta=0.0000005$,\,  
$q_E=0.86$,\, 
$q_R=0.14$,\,
$d_E=0.00083$,\,
$d_R=0.000083$,\,
$p_E= 0.00175$,\,
$p_R=0.0019$,\,
$p_S=1$,\,
$p_V=0$,\,
$\sigma=0.8$,\,
$c_E=0.0006$,\, 
$c_R=0.0008$,\, 
$k=0.66$,\,
$\delta=0.0016$,\, 
$\mu =0.000034247$,\,
$\Lambda=  600$ 
}\label{fig:Fig0}
\end{center}
 \end{figure}

We now use numerical simulations of system \eqref{eqn:modela} to illustrate and support the results of 
our mathematical analysis. 

We assume that the life expectancy is 80 years,  which implies that  death rate is $\mu = 0.000034247\,
(\textrm{days})^{-1}$ \cite{Hyman-James-M-and-LaForce-Tara2003}.
We take $\Lambda=600 \, (\textrm{days})^{-1}$, which corresponds to a population size of about 17.5 million.
The remaining  parameters are taken to  be $ \beta =0.0000005 \,(\textrm{days})^{-1}$,  
$d_E =0.00083 \,(\textrm{days})^{-1}$, $d_R=0.00083 \,(\textrm{days})^{-1}$, $p_E=0.00175$,
$p_R=0.0019 \,(\textrm{days})^{-1}$, $c_E=0.0006 \,(\textrm{days})^{-1}$, $c_R=0.0008 \,(\textrm{days})^{-1}$,  
and  $\delta=0.0016 \,(\textrm{days})^{-1}$, $q_E  = 0.86 $, $q_R=0.14 $ $ \sigma = 0.2 $ and $k=0.66$.
We also assume that, initially, there is no prevention program by choosing $ p _S = 1 $ , $ p _V = 0 $. 

In this case we find that $ \mathcal{R} _0 =1.0183912670368627$, and  hence, 
$ x ^\ast $ is globally asymptotically stable in $ \mathbb{R}  ^5 _{ >0 }$ by Theorem \ref{thm:x_ast}. 

Figures \ref{fig:Fig0} (a)-(d)  show that the number of individuals in the compartments $ S,E,R $, and $ T $,
  approach a constant value. 
 Figure \ref{fig:Fig0} (e) shows that the number of individuals in the vaccination compartment is, in this case zero. 
Figure  (f), instead, is a  phase portrait for the system. These figures  confirm that  the solutions approach a globally asymptotically stable equilibrium point. This case illustrates the   unwanted scenario where terrorists and recruiters become endemic to the population.

If we raise  $ p _V  $   to   $p_V=0.1 \,(\textrm{days})^{-1}$  without changing the other parameters, then  $ \mathcal{R} _0 =1.0126247813740998$ and the endemic equilibrium  $ x ^\ast $ is still globally asymptotically stable.
$ S, E, R $, and $ T $  still approach a constant value as $ t \to \infty $, however, the number of individuals in the compartments $ E $ and $ R $ is much less than before, see Figure \ref{fig:Fig1} (a)-(f). %Figures \ref{fig:E0} (a)--(d) show that $ S, E, R, T \to 0  $, as the time  $ t $ grows large, confirming that $ x _0 $ is globally asymptotically stable. This is  the preferred situation, where the number of extremists and recruiters decreases to zero.

If  we increase $ p _V  $   to $ p_V = 0.9 $, keeping the other parameters constant, then  $ \mathcal{R} _0 =0.9664928960719988$, so that 
$ x ^0 $ is globally asymptotically stable. Figures \ref{fig:Fig2} (a)--(d) show that $ S, E, R, V \to 0  $,
as   $ t \to \infty $, supporting our result on the asymptotic stability of  $ x_0 $.
This means that if a large enough number of susceptible individuals is targeted by effective
prevention programs it is possible to change the stability of the equilibria, ensuring that 
extremists and recruiters eventually disappear from the population.

\begin{figure}[!h]
\begin{center}
\subfigure[]{\rotatebox{0}{\includegraphics[width=0.45 \textwidth,
height=45mm]{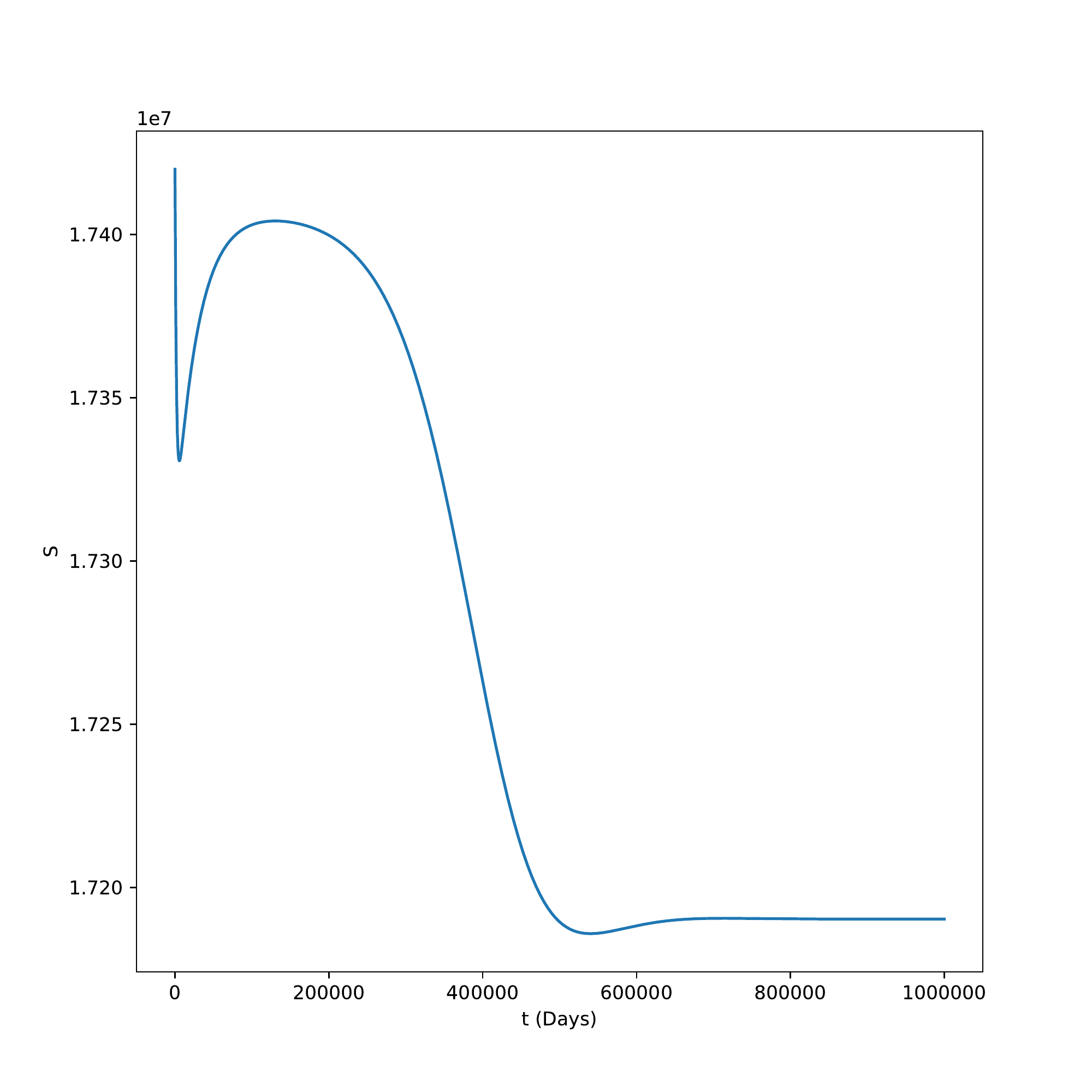}}}
\subfigure[]{\rotatebox{0}{\includegraphics[width=0.45 \textwidth,
height=45mm]{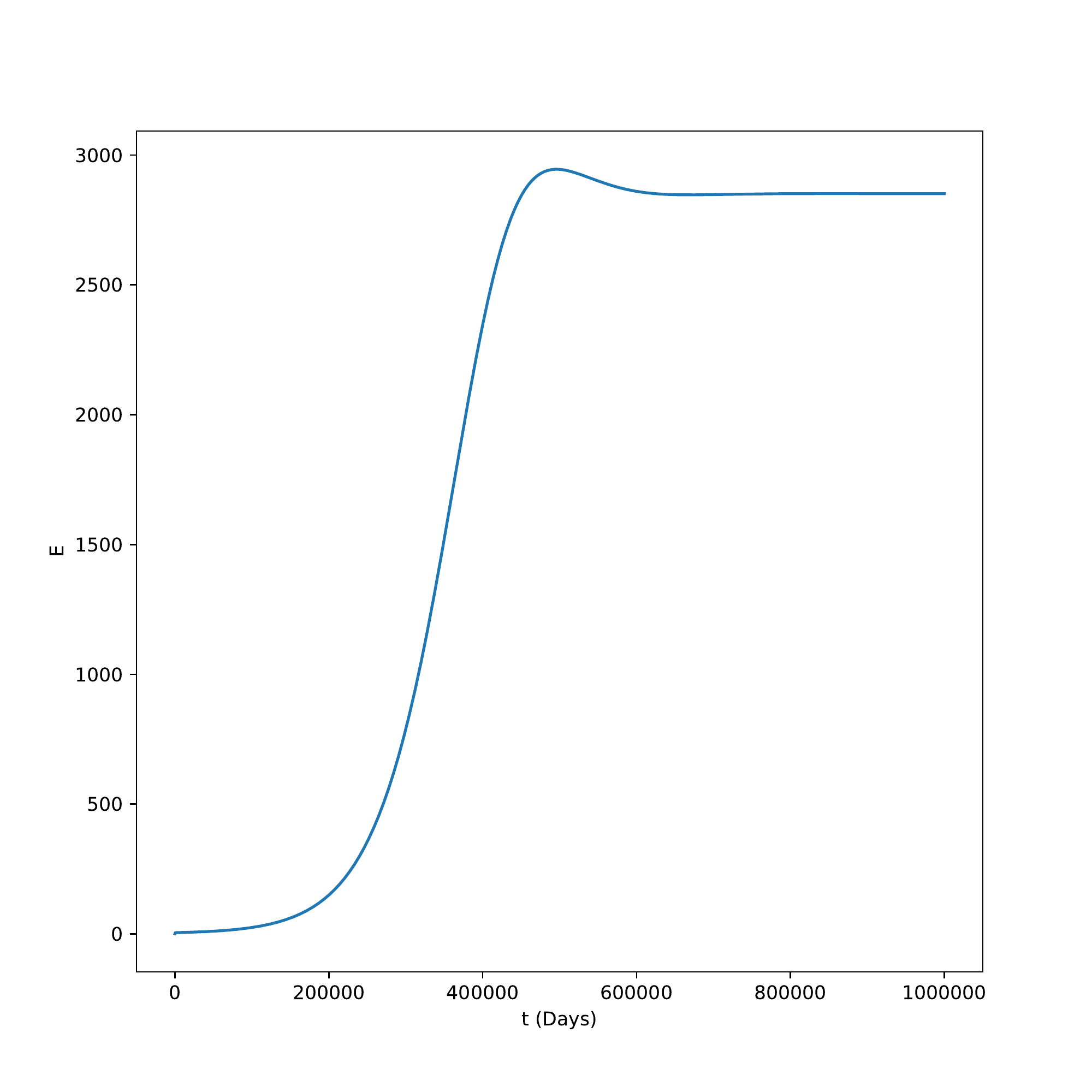}}}
\subfigure[]{\rotatebox{0}{\includegraphics[width=0.45 \textwidth,
height=45mm]{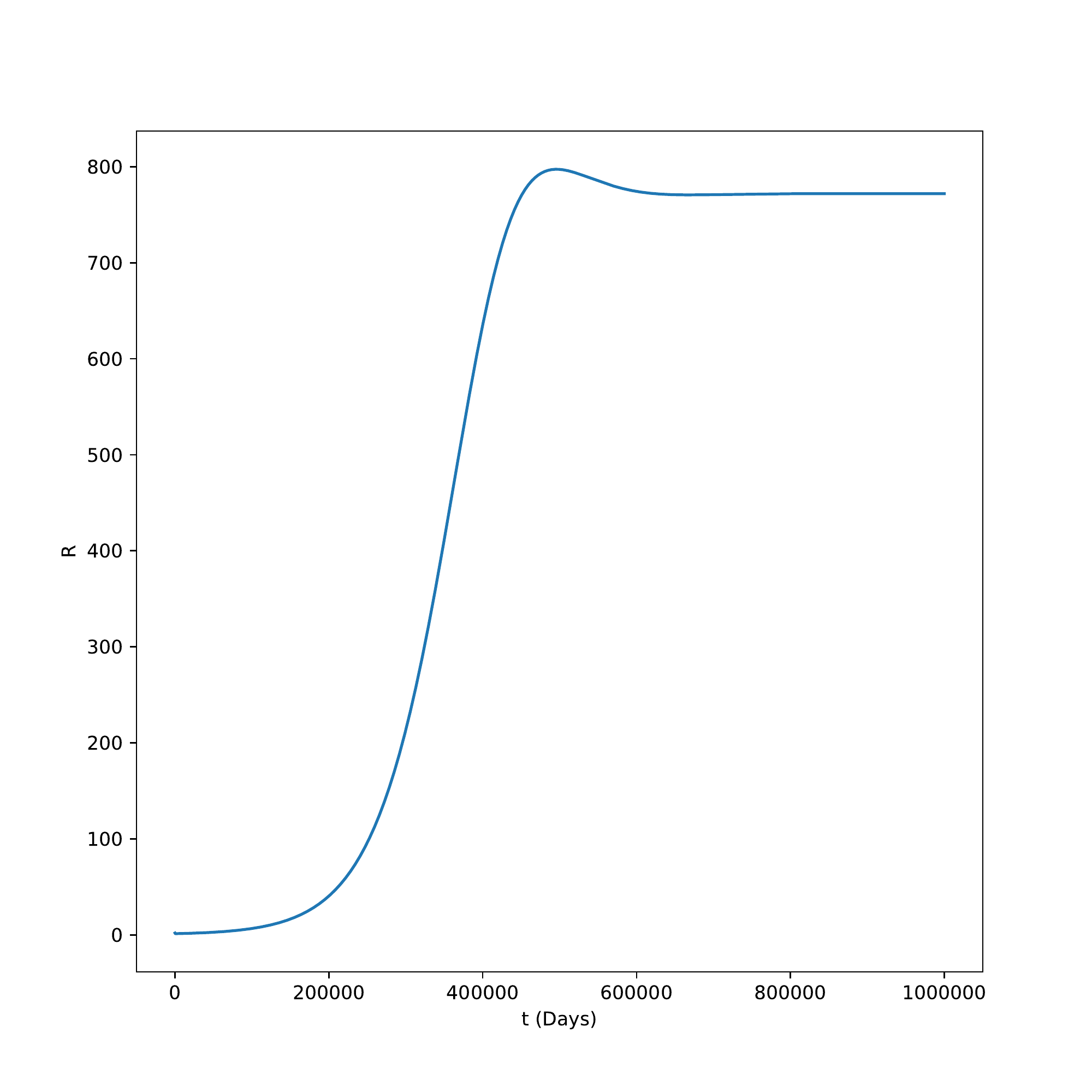}}}
\subfigure[]{\rotatebox{0}{\includegraphics[width=0.45 \textwidth,
height=45mm]{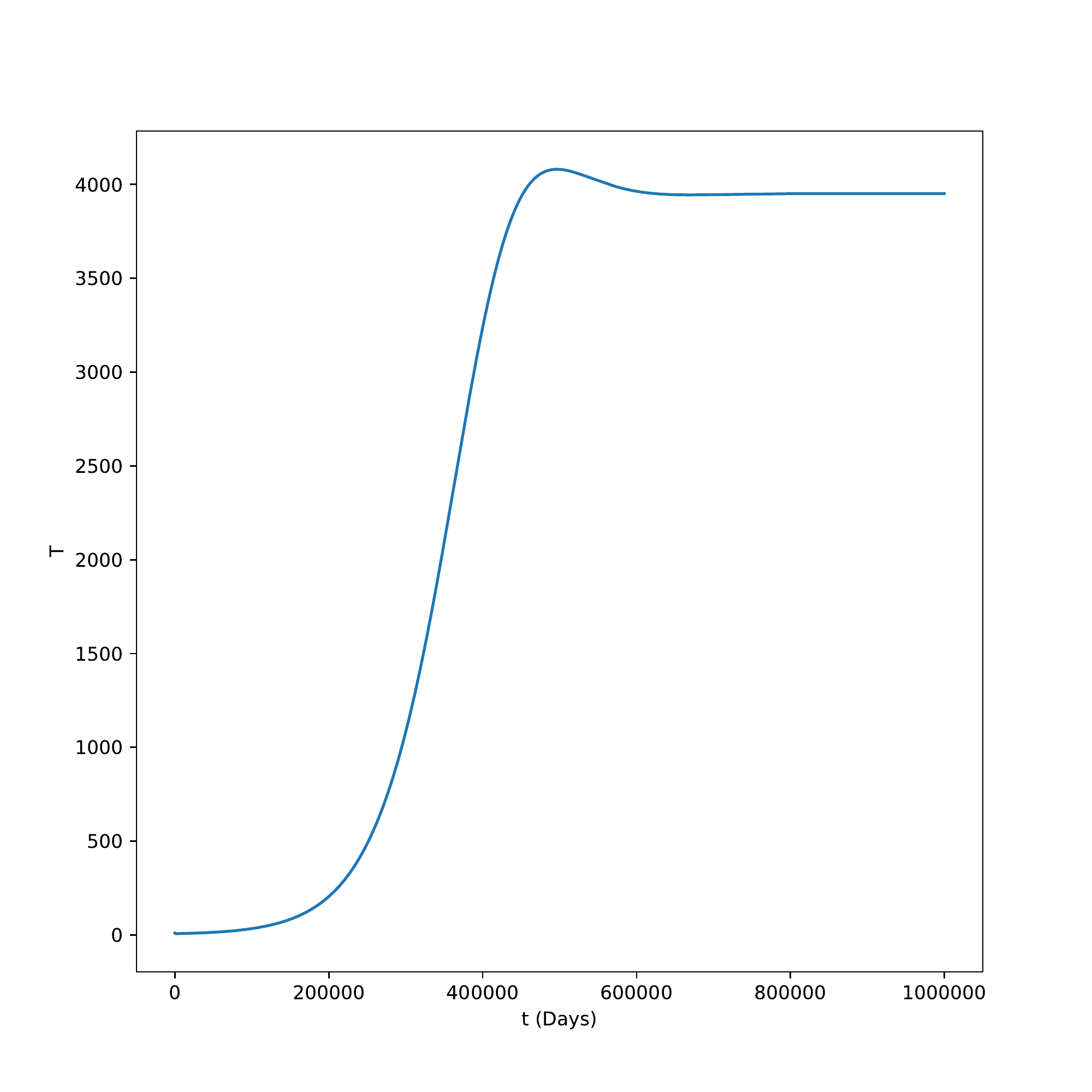}}}
\subfigure[]{\rotatebox{0}{\includegraphics[width=0.45 \textwidth,
height=45mm]{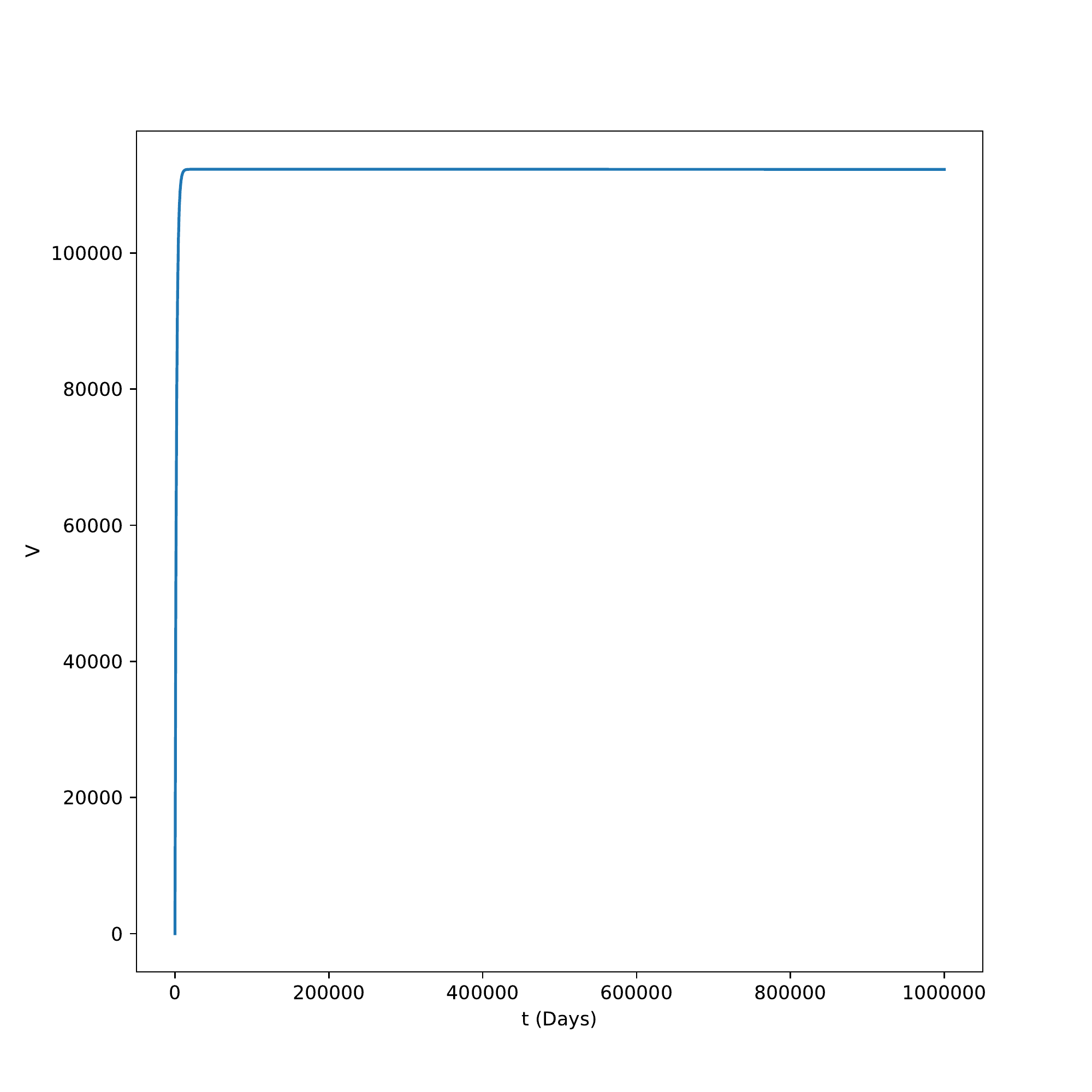}}}
\subfigure[]{\rotatebox{0}{\includegraphics[width=0.45 \textwidth,
height=45mm]{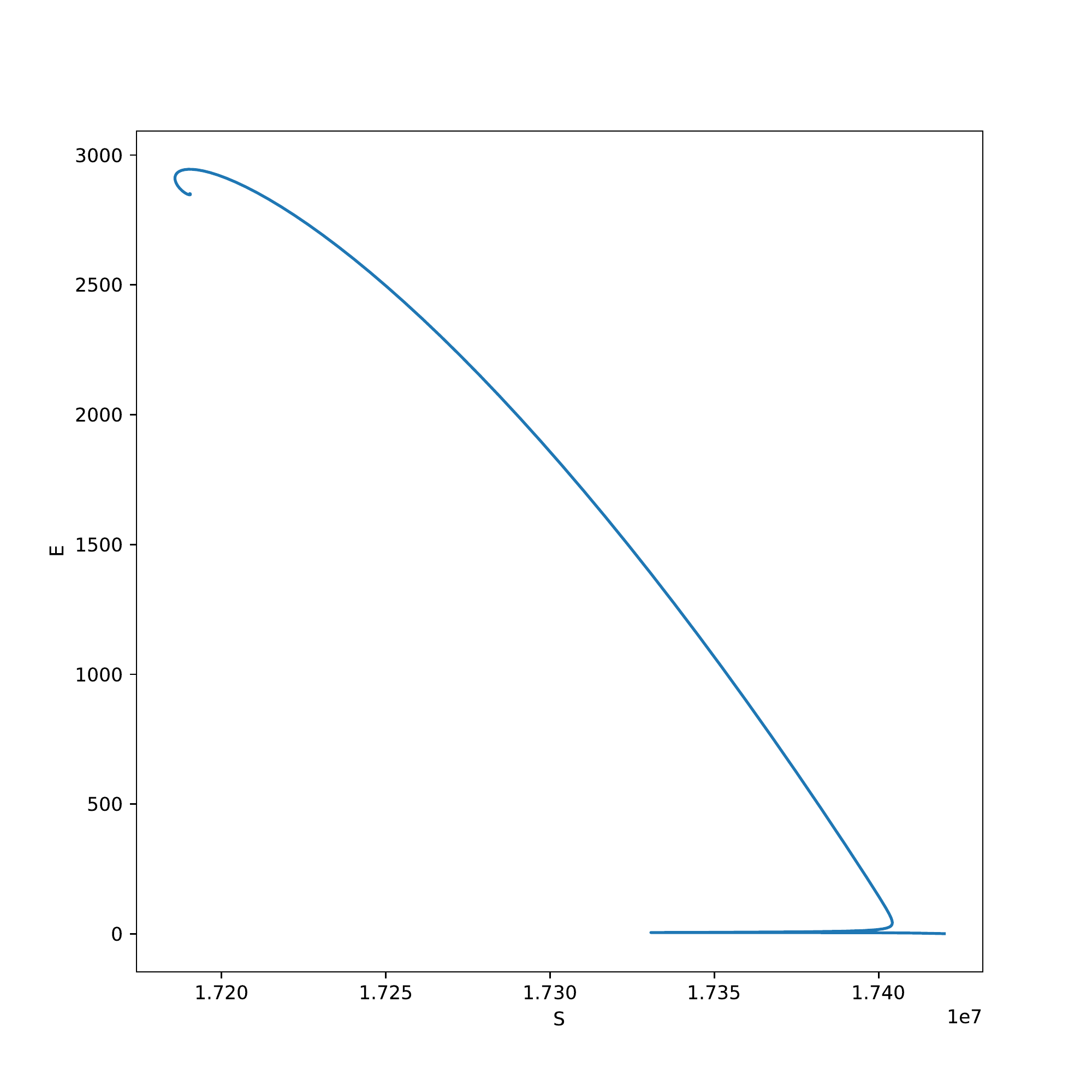}}}
 \vspace{-2mm}
 \caption{
\footnotesize Time history and phase portraits of system (\ref{eqn:modela}) for $\beta=0.0000005$,\,  
$q_E=0.86$,\, 
$q_R=0.14$,\,
$d_E=0.00083$,\,
$d_R=0.000083$,\,
$p_E= 0.00175$,\,
$p_R=0.0019$,\,
$p_S=0.9$,\,
$p_V=0.1$,\,
$\sigma=0.8$,\,
$c_E=0.0006$,\, 
$c_R=0.0008$,\, 
$k=0.66$,\,
$\delta=0.0016$,\, 
$\mu =0.000034247$,\,
$\Lambda=  600$ 
}\label{fig:Fig1}
\end{center}
 \end{figure}

 \begin{figure}[!h]
\begin{center}
\subfigure[]{\rotatebox{0}{\includegraphics[width=0.45 \textwidth,
height=45mm]{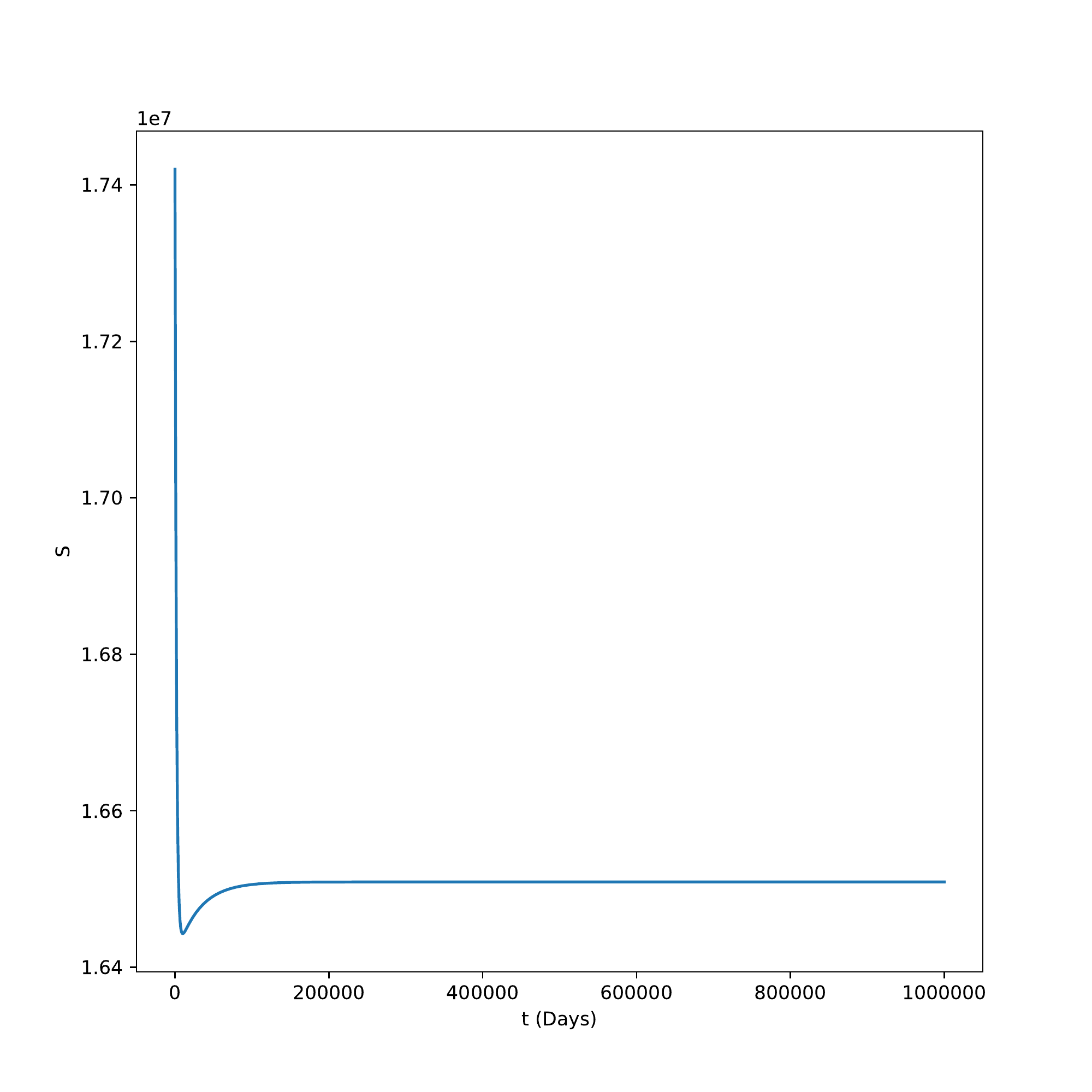}}}
\subfigure[]{\rotatebox{0}{\includegraphics[width=0.45 \textwidth,
height=45mm]{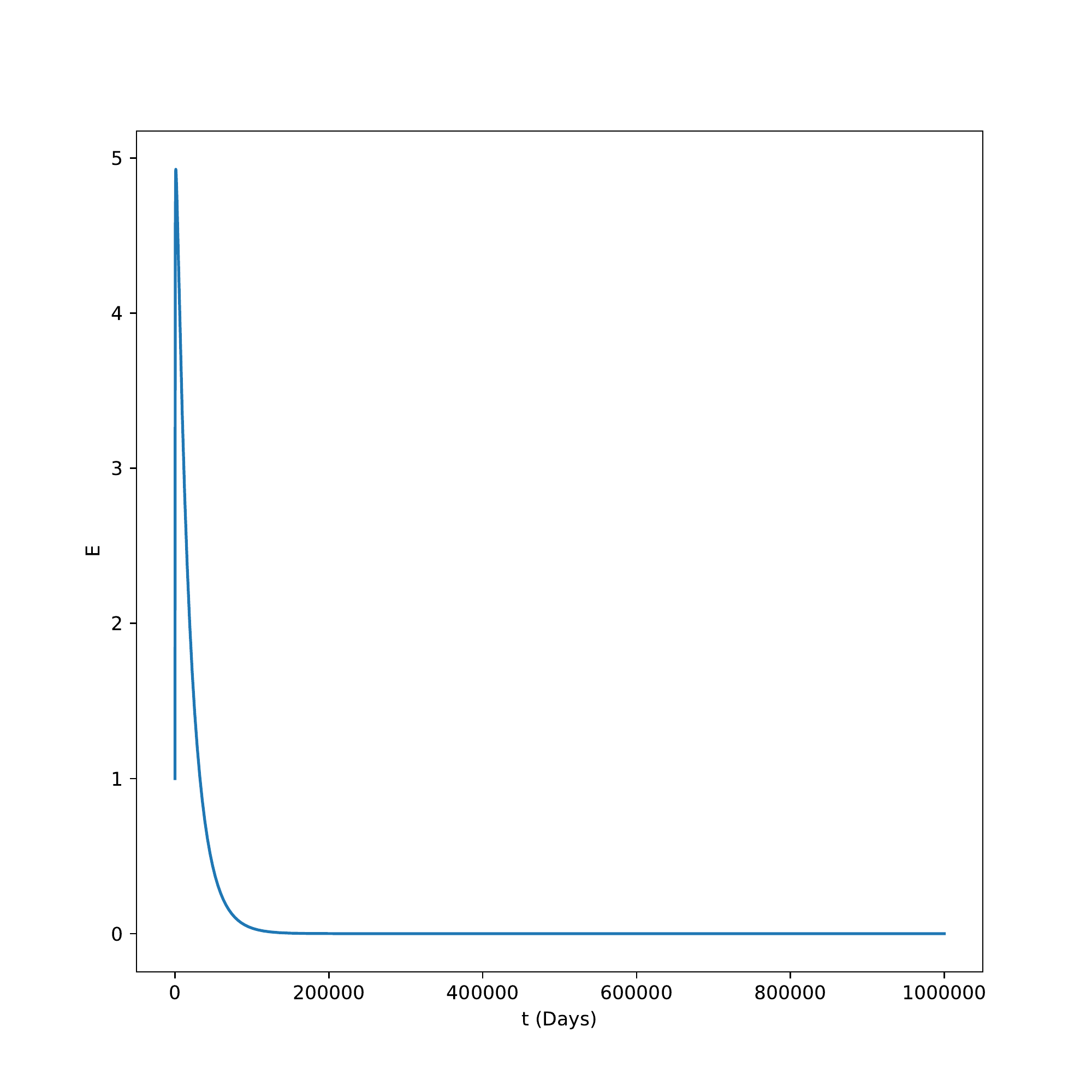}}}
\subfigure[]{\rotatebox{0}{\includegraphics[width=0.45 \textwidth,
height=45mm]{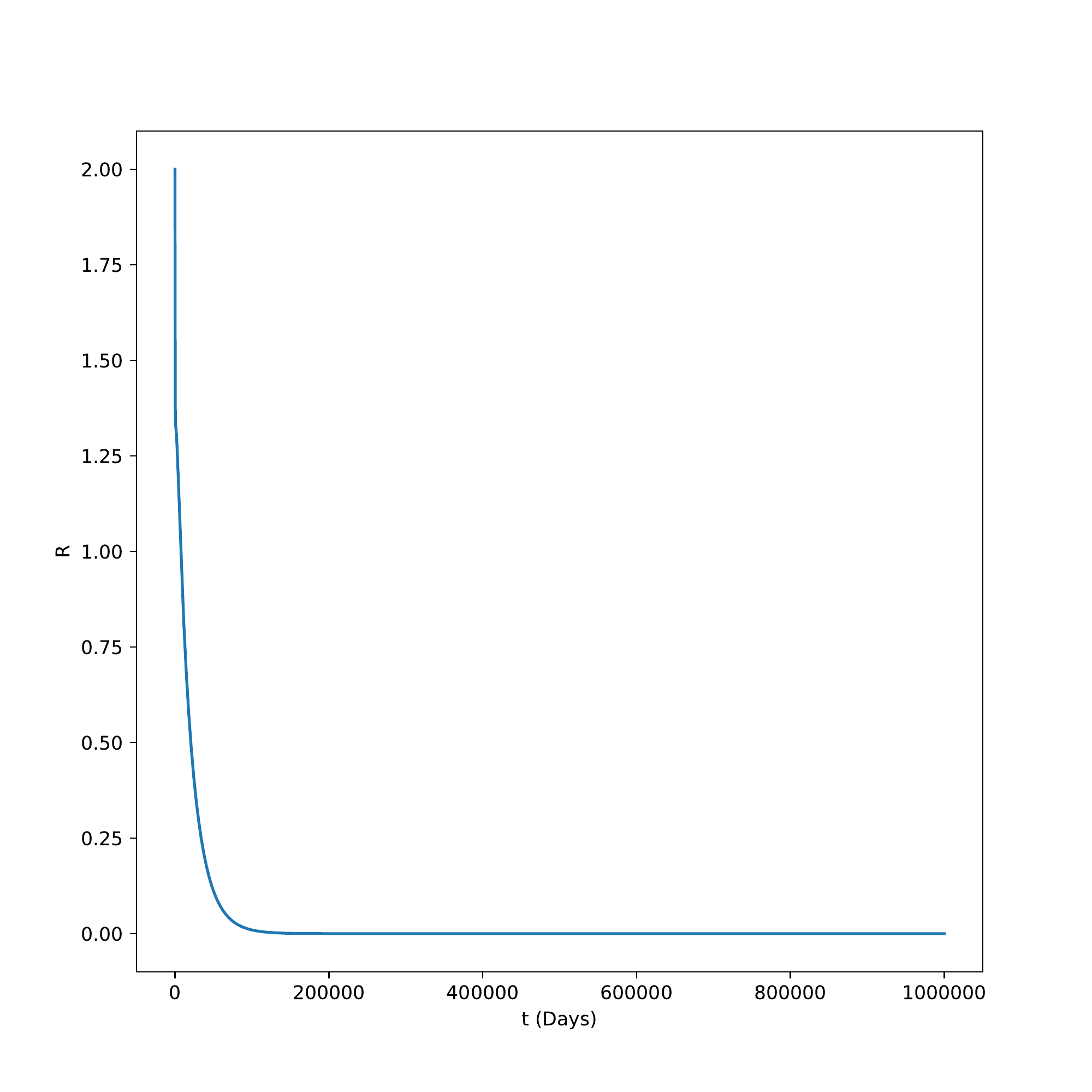}}}
\subfigure[]{\rotatebox{0}{\includegraphics[width=0.45 \textwidth,
height=45mm]{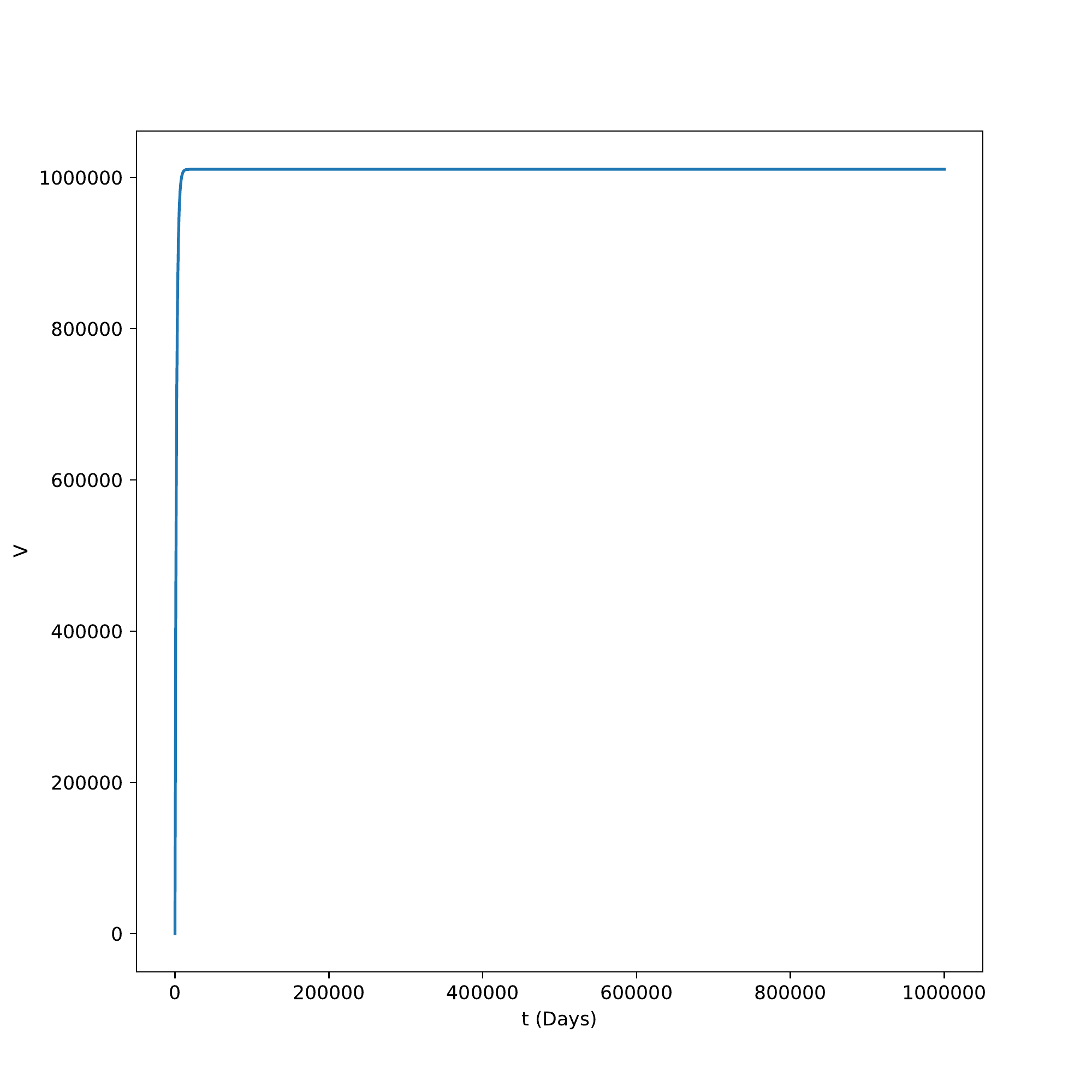}}}

 \vspace{-2mm}
 \caption{
\footnotesize Time history of system (\ref{eqn:modela}) for $\beta=0.0000005$,\,  
$q_E=0.86$,\, 
$q_R=0.14$,\,
$d_E=0.00083$,\,
$d_R=0.000083$,\,
$p_E= 0.00175$,\,
$p_R=0.0019$,\,
$p_S=0.1$,\,
$p_V=0.9$,\,
$\sigma=0.8$,\,
$c_E=0.0006$,\, 
$c_R=0.0008$,\, 
$k=0.66$,\,
$\delta=0.0016$,\, 
$\mu =0.000034247$,\,
$\Lambda=  600$ 
}\label{fig:Fig2}
\end{center}
 \end{figure}
 %\section{Discussion}
 \section{Conclusions}
This paper considers a model of radicalization in which the population is divided into five compartments: 
Susceptible, Vaccinated, Extremists, Recruiters and Treated. The model incorporates as part of the analysis
two key strategy of CVE, namely, prevention and de-radicalization. This work builds upon the papers of 
McCluskey and Santoprete \cite{mccluskey2018bare}, and  Santoprete and Xu \cite{santoprete2018global}.

The model considered in this paper has a threshold dynamics governed by the  basic reproduction number 
$ \mathcal{R} _0 $. If $ \mathcal{R} _0 \leq 1$, then there is only one equilibrium, 
free from  extremists and recruiters, which is globally asymptotically stable provided $ A p_V>4 $. In this case
the extremist ideology will be eradicated.
If $ \mathcal{R} _0 >1 $, the equilibrium mentioned above becomes unstable while an additional  equilibrium, 
which we call ``endemic", appears. This second equilibrium is globally asymptotically stable for
$ \mathcal{R} _0 > 1 $. In this case the ideology will become endemic, that is, recruiters and extremists will 
establish themselves in the population.

The analysis in  \cite{mccluskey2018bare} showed that    an increase in police and military action, that  is, 
increasing the parameters $ d _E $ and $ d _R $, decreased $ \mathcal{R} _0 $. 
The same conclusion applies in the model studied in this paper. 

In \cite{santoprete2018global} we established that an increase in the success rate of the de-radicalization 
programs, or an increase in the rates $ p _R $ and $ p _E $ at which individuals in the $ R $ and $ E $ 
compartments enter the Treated class caused $ \mathcal{R} _0 $ to decrease. Similar conclusions apply in 
the present model, proving that de-radicalization programs can be  an integral part  of a successful effort 
to combat violent extremism. If it is not possible to change $ k $, $ p _E $ or  $ p _R $ it  is often possible to  increase 
$ \frac{ 1 } { \delta } $,  the average prison sentence, which in turn decreases $\mathcal{R} _0 $. 
As observed in \cite{santoprete2018global}, this illustrates that incrementing  prison sentences length could 
be used to remedy the shortcoming of poorly designed de-radicalization programs.

Moreover, an increase in the fraction $ p _V $ of individuals entering the Vaccinated class 
produces a reduction in  $ \mathcal{R} _0 $. Similarly, a decrease in $ \sigma $, which accounts for the
decreased rate at which vaccinated individuals enter the Extremist class, corresponds to a decrease 
in $ \mathcal{R} _0 $. This demonstrates  that  prevention strategies can be effective  in reducing extremism.

As we have seen, since the basic reproduction number is expressed in terms of the model parameters, it is
easy  to evaluate various strategies to be used in combating violent extremism. 
However, there are two  shortcoming in our analysis. The first one  is that it is difficult to obtain realistic estimates of 
the parameters. This is, at least in part, due to a lack of empirical data regarding CVE 
programs \cite{mastroe2016surveying}.  The second is that we made many 
simplifying assumptions in the model. Some of these are listed in \cite{santoprete2018global}. 
These simplifying assumptions open up a number of  research 
paths that we intend to address in future work.

% Fakesection
%\Urlmuskip=0mu plus 1mu\relax
%\raggedright
%%% --- The following two lines are what needs to be added --- %%%
%\setcounter{biburllcpenalty}{800}
%\setcounter{biburlucpenalty}{800}
\bibliography{rad_papers}{}
\bibliographystyle{siam}
%\bibliographystyle{plain}
%\justifying

\end{document}